\documentclass[journal,a4,twocolumn,twoside]{IEEEtran}

\usepackage{cite}
\usepackage{subfigure}
\usepackage{amssymb,amsmath}
\usepackage{algorithmic}

\usepackage{color}
\usepackage{epstopdf}
\usepackage{multirow}
\usepackage{multicol}
\usepackage{cases}
\usepackage{setspace}
\usepackage{amsthm}
\usepackage{caption}
\usepackage{delarray}
\usepackage{tikz}
\usepackage[ruled]{algorithm2e}
\SetKwInput{KwInput}{Input}
\SetKwInput{KwOutput}{Output}
\SetKwInput{Initialization}{Initialization}
\usepackage{balance}
\usepackage[mathscr]{euscript}
 \let\mathscr\relax
\usepackage[scr]{rsfso}
\allowdisplaybreaks
\usepackage{bbm}

\newtheorem{theorem}{\bf {Theorem}}

\newtheorem{remark}{{\bf{Remark}}}

\newtheorem{lemma}{\bf {Lemma}}

\newcommand{\st}{{\mathrm{s.t.}}}

\renewcommand{\algorithmicrequire}{\textbf{Input:}}

     \usepackage[compact]{titlesec}
    \titlespacing{\section}{0pt}{2ex}{1ex}
    \titlespacing{\subsection}{0pt}{1ex}{0ex}
    \titlespacing{\subsubsection}{0pt}{0.5ex}{0ex}

\usepackage{capt-of}
\usepackage{dblfloatfix}
\usepackage{varwidth}
\usepackage{tikz}

\usepackage{graphicx}
\usepackage{epstopdf}
\DeclareGraphicsExtensions{.eps,.pdf} 
\graphicspath{ {figures/} }

\usepackage{setspace}
\renewcommand{\baselinestretch}{1} 

\usepackage{amsfonts}
\usepackage{geometry}
\usepackage{array}
\usepackage{booktabs}
\usepackage{siunitx}

\usepackage{babel}
\usepackage{hyperref}

\usepackage{hyperref}
\hypersetup{
 unicode=false,          
 pdftoolbar=true,        
 pdfmenubar=true,        
 pdffitwindow=false,     
 pdfstartview={FitH},    
 pdftitle={My title},    
 pdfauthor={Author},     
 pdfsubject={Subject},   
 pdfcreator={Creator},   
 pdfproducer={Producer}, 
 pdfkeywords={keyword1, key2, key3}, 
 pdfnewwindow=true,      
 colorlinks=true,       
 linkcolor=purple,          
 citecolor=purple,        
 filecolor=purple,      
 urlcolor=purple           
}

\makeatother

\begin{document}
        \setlength{\abovedisplayskip}{4.5pt}
	\setlength{\belowdisplayskip}{4.5pt}
\title{\huge Performance Analysis and Sensing-Aware Resource Allocation for Full-Duplex Massive MIMO ISCC Networks}
\author{
\IEEEauthorblockN{Vo Phi Son, Van-Dinh Nguyen, Nhan Thanh Nguyen, Tuan-Vu Truong and Symeon Shatzinotas\\}
\thanks{V. P. Son and T.-V. Truong are with Smart Green Transformation Center and College of Engineering and Computer Science, VinUniversity, Vietnam (e-mail: \{son.vp, vu.tt\}@vinuni.edu.vn). N. T. Nguyen is with Centre for Wireless Communications, University of Oulu, Finland (e-mail: nhan.nguyen@oulu.fi). V.-D. Nguyen (corresponding author) is with Trinity College Dublin,  Ireland  (e-mail: dinh.nguyen@tcd.ie). S. Chatzinotas is with SnT, University of Luxembourg, Luxembourg (e-mail: symeon.chatzinotas@uni.lu).}
\vspace{-5pt}}

\markboth{Preprint}%
{Vo \MakeLowercase{\textit{et al.}}: Performance Analysis and Sensing-Aware Resource Allocation for FD mMIMO ISCC Networks}
\maketitle 

\begin{abstract}
Emerging edge applications must sense their environment and process data within tight latency budgets. Full-duplex (FD) integrated sensing, communication, and computing (ISCC) supports these functions concurrently, but radar probing interferes with uplink reception while user transmissions degrade target estimation. These coupled effects call for joint sensing and offloading decisions. We investigate a FD massive multiple-input multiple-output (mMIMO) ISCC network in which users partially offload tasks to an edge server while an access point probes a target. For zero-forcing uplink reception with imperfect channel estimates, we derive a tractable rate bound and target-angle Cramér–Rao lower bounds (CRLBs) under the stated interference and beam-alignment assumptions. We then jointly optimize offloading fractions, user and radar transmit powers, and edge computing resources to minimize average end-to-end task latency subject to sensing-accuracy, communication-rate, energy, deadline, and resource constraints. A successive convex approximation algorithm is developed to efficiently address the resulting nonconvex problem. The analysis shows that larger arrays improve sensing and uplink performance, but residual radar interference creates a finite rate ceiling when user powers scale down with array size. Simulation results show joint allocation reduces radar transmit power by approximately $86\%$ relative to equal communication-user power at a $-40$ dB CRLB threshold and lowers latency compared with fixed-allocation and half-duplex schemes.
\end{abstract}
\begin{IEEEkeywords} 
Full duplex, integrated sensing, communication, and computation, massive multiple-input multiple-output, task offloading.
\end{IEEEkeywords}

\section{Introduction}\label{sec_Intro}
Integrated sensing, communication, and computation (ISCC) supports emerging edge applications, such as autonomous driving and smart cities, that require timely sensing and data processing under limited spectrum and energy resources \cite{Wen_ISCC_survey}. By sharing communication, radar, and computing resources, ISCC enables joint decisions across these functions \cite{Qi_ISCC_2022}. Full-duplex (FD) massive multiple-input multiple-output (mMIMO) systems can receive uplink data while transmitting radar signals \cite{Zhang_MIMO_2021}. However, radar probing interferes with data reception, while user transmissions degrade target estimation. These effects couple sensing accuracy with task-offloading latency and motivate joint allocation of transmission and computing resources.

\subsection{Related Work}\label{sec_related_work}
Research on integrated sensing and communication (ISAC) provides the basis for coordinating radar sensing and data transmission in ISCC. Radar-centric schemes embed data in radar waveforms or sidelobes through modulation and waveform design \cite{Ma_JSAC_2021, Ma_ISAC_2021, Huang_ISAC_2020, Hassanien_2016}. Communication-centric schemes reuse communication signals to estimate target parameters \cite{Kumari_ISAC_2018}. For FD-ISAC, Liu \textit{et al.} \cite{Liu_2023} jointly designed transmit and receive beamforming to manage communication and sensing interference. The work in \cite{Jiang_ISAC_2026} considered FD-ISAC-assisted device-to-device communication to improve spectral efficiency. Half-duplex (HD) operation separates communication and sensing in time, reducing their mutual interference but limiting concurrent operation. FD avoids this separation at the cost of residual self-interference (RSI), which must be considered when balancing sensing accuracy and communication rates.

When computation offloading is added to these sensing and communication functions, this interference trade-off also affects task delay, energy consumption, and processing-resource allocation. Huang \textit{et al.} \cite{Huang_ISAC_2024} used an edge server to support short-packet offloading and minimized energy consumption under radar constraints. The authors of \cite{Huynh_ISSC_2024} studied ISCC for low-latency mobile edge computing (MEC) services, jointly addressing end-to-end (e2e) latency and service installation costs. Chen \textit{et al.} \cite{Chen_ISCC_2024} considered secure MEC assisted by an FD-ISAC station, using radar transmission for sensing and jamming while allocating computing resources. Qi \textit{et al.} \cite{Qi_ISCC_2022} developed a joint sensing, computing, and communication framework in which sensors perform local processing before transmission. These studies establish the importance of coordinating offloading with sensing. In the uplink setting considered here, conventional communication users and MEC users also share the receive array. Their different service requirements and mutual interference introduce additional coupling between power allocation and task completion.

Managing this coupling requires effective spatial interference suppression, motivating the use of massive arrays. These arrays extend conventional MIMO ISAC designs \cite{Liu_ISAC_2018,Li_ISAC_MIMO_2016,Liu_ISAC_MIMO_2016} by providing greater spatial gain and interference suppression \cite{Buzzi_ISAC_MIMO_2019,Temiz_ISAC_MIMO_2022}. The beamforming designs in \cite{Nhan_2024,Nhan_2023} improve communication rates under sensing constraints. Analytical rate and sensing expressions allow resource allocation without repeatedly averaging over small-scale channel realizations. Liao \textit{et al.} \cite{Liao_ISAC_2024} derived communication signal-to-interference-plus-noise ratio (SINR) and sensing mainlobe-to-average-sidelobe metrics for massive MIMO ISAC power allocation. Topal \textit{et al.} \cite{Topal_2025} considered multi-target angle estimation through Cram\'er--Rao lower bounds (CRLBs), while \cite{Huy_ISAC_2026} used communication and sensing metrics for energy-efficient design. These results motivate a corresponding analysis of uplink offloading in FD-mMIMO ISCC, where radar leakage affects data reception and uplink signals enter the sensing interference.

\subsection{Motivation and Main Contributions} \label{sec_motivation_and_contributions}
Previous studies have investigated communication--sensing trade-offs in FD systems~\cite{Liu_2023} and joint transmission and computing resource allocation for MEC-assisted sensing~\cite{Huang_ISAC_2024, Huynh_ISSC_2024}. These studies motivate us to examine how FD-mMIMO networks can reduce task latency while meeting target-angle accuracy requirements. In our setting, conventional users must meet minimum communication rates, whereas MEC users must complete tasks within latency and energy budgets. As both groups interfere with sensing, increasing radar power can improve target estimation but reduce offloading rates, while increasing user power can require more radar power to maintain sensing accuracy. Since task completion also depends on execution time, power allocation must be considered together with offloading fractions and edge computing resources~\cite{Huang_ISAC_2024,Huynh_ISSC_2024}.

To study these coupled decisions, we consider partial offloading in a monostatic FD mMIMO network operating in target-tracking mode, where a prior target-angle estimate determines the sensing direction~\cite{Liu2021CRB}. We derive uplink rate and angular estimation bounds that describe how user and radar powers affect communication and sensing performance under residual self-interference and imperfect channel state information (CSI). The analysis also identifies the effects of array size. We then use these expressions to jointly allocate wireless and computing resources based on large-scale channel information, subject to the requirements of communication and MEC users. The main contributions are summarized as follows:
\begin{itemize}
\item We derive an uplink rate bound for zero-forcing (ZF) reception with imperfect CSI, accounting for channel-estimation errors, RSI, and target-echo interference. We also derive angular CRLBs under the stated interference and beam-alignment assumptions. These expressions show how radar and user transmit powers affect both offloading rates and target estimation accuracy.

\item We analyze how antenna numbers and transmit powers affect communication and sensing performance. The uplink rate increases with the receive-array size when user powers remain fixed, but approaches a finite limit when all user powers scale as $1/N_{\mathtt{r}}$, with fixed radar power and transmit-array size. The sensing analysis identifies the different roles of transmit-array gain and receive-array resolution in improving angular accuracy.

\item We formulate a joint resource allocation problem to minimize average e2e task latency by optimizing offloading fractions, user and radar powers, and edge CPU resources. The formulation accounts for task deadlines, energy budgets, minimum communication rates, and sensing requirements. We develop a successive convex approximation (SCA)-based algorithm that approximates the coupled rate, delay, and energy terms through convex subproblems and expresses the sensing constraints in affine form.

\item We compare the analytical results with Monte Carlo simulations and examine the algorithm's convergence and performance under different sensing thresholds, antenna configurations, task sizes, and user loads. Comparisons with fixed-allocation and HD schemes show how jointly adapting transmit powers and offloading fractions reduces task latency in the considered scenarios.
\end{itemize}

\subsection{Paper Organization and Notations}\label{sec_paper_structure_notations}
Section~\ref{sec_SystemModel} presents the system model, and Section~\ref{sec_System_analysis} derives the uplink rate and sensing CRLBs. Section~\ref{sec_Problem_Proposed_solution} develops the optimization problem and solution, followed by numerical results in Section~\ref{sec_NumericalResults} and conclusions in Section~\ref{sec_conclusion}.

\textbf{Notation:} Bold lowercase and uppercase letters denote vectors and matrices, respectively. The operators $(\cdot)^*$, $(\cdot)^{\mathtt{T}}$, $(\cdot)^{\mathtt{H}}$, and $\operatorname{Tr}(\cdot)$ denote conjugation, transpose, Hermitian transpose, and trace, respectively. The symbols $|\cdot|$, $\|\cdot\|$, and $\mathbb{E}[\cdot]$ denote absolute value, Euclidean norm, and expectation, respectively. We write $\dot y_x=\partial y/\partial x$, and use $\Re(\cdot)$ and $\Im(\cdot)$ for real and imaginary parts. The notation $\mathcal{CN}(\boldsymbol{\mu},\boldsymbol{R})$ denotes a circularly symmetric complex Gaussian distribution with mean $\boldsymbol{\mu}$ and covariance $\boldsymbol{R}$.

\section{System Model}\label{sec_SystemModel}
\begin{figure}[t]
	\centering
	\includegraphics[width=0.92\columnwidth,trim={0cm 0.0cm 0cm 0.0cm}]{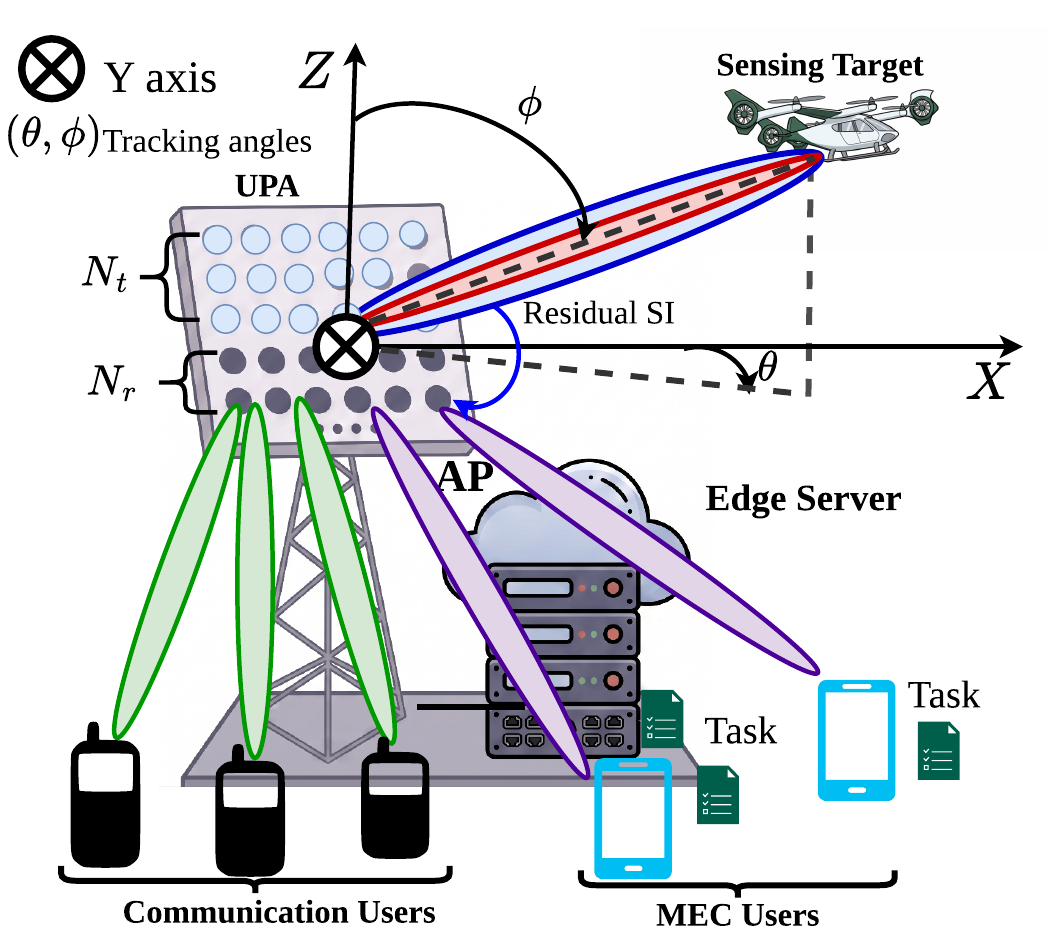}
	\caption{Illustration of monostatic FD-mMIMO ISCC system.}
	\label{fig:system_model}
\end{figure}
As illustrated in Fig.~\ref{fig:system_model}, we consider a monostatic FD-mMIMO ISCC network comprising a single access point (AP) integrated with an edge server (ES), a sensing target, and $K=K_{\mathtt{MEC}}+K_{\mathtt{com}}$ single-antenna user equipment (UE) devices, where $K_{\mathtt{MEC}}$ and $K_{\mathtt{com}}$ denote the numbers of priority MEC UEs requiring high uplink rates to offload tasks to an ES and conventional communication UEs, respectively. UEs $k\in\mathcal{K}_{\mathtt{MEC}}$ require computation offloading, whereas UEs $k'\in\mathcal{K}_{\mathtt{com}}$ engage in conventional uplink communications, with $\mathcal{K}=\mathcal{K}_{\mathtt{MEC}}\cup\mathcal{K}_{\mathtt{com}}$. The AP is equipped with uniform planar arrays (UPAs) comprising $N_{\mathtt{t}}$ transmit and $N_{\mathtt{r}}$ receive antennas, where $N_{\mathtt{t}},N_{\mathtt{r}}\gg 1$.

The system operates in three phases. In phase 1, the UEs transmit orthogonal pilots for channel estimation. In phase 2, the UEs transmit offloaded tasks and communication data to the AP, while the AP simultaneously transmits a radar probing signal toward the target located at a specific angle, which has been determined during the radar scanning or searching phase. The task of user $k\in \mathcal{K}_{\mathtt{MEC}}$ is modeled as $J_k \triangleq (D_k, C_k, T_k^{\max})$, where $D_k$, $C_k$, $T_k^{\max}$ denote the task size, required CPU cycles, and task deadline, respectively. We adopt a sensing-aware partial offloading strategy. We define $\boldsymbol{\varphi} \triangleq \{\varphi_k \, | 0\leq  \varphi_k \leq 1\}_{\forall k}$ as the vector of task fractions offloaded to the ES, while $1-\varphi_k$ denotes the remaining task fraction executed locally at the user's device. In phase 3, the ES executes the offloaded tasks. Note that local task processing at the UEs occurs in parallel across all phases.
\begin{remark}
Conventional FD-ISAC architectures suffer from severe uplink-to-downlink interference~\cite{Tang_2026, Liu_2023}. Exploiting asymmetric MEC traffic, we defer downlink to Phase 3. Reserving Phase 2 for uplink transmission and sensing avoids simultaneous uplink and downlink data transmission.
\end{remark}

\subsection{Communication Model}\label{sec_Communications_model}
\noindent\textit{1) Uplink pilot training and imperfect CSI:} During the pilot training phase, each user $u \in \mathcal{K}$ transmits a pilot sequence of length $\tau_p \ge K$ with duration $t_p = \tau_p T_s$, where $T_s$ denotes the symbol period. The uplink channel is modeled as $\boldsymbol{h}_u = \beta_u^{1/2} \bar{\boldsymbol{h}}_u \in \mathbb{C}^{N_{\mathtt{r}} \times 1}$, where $\beta_u$ accounts for large-scale fading and $\bar{\boldsymbol{h}}_u \sim \mathcal{CN}(\boldsymbol{0}, \boldsymbol{I}_{N_{\mathtt{r}}})$ represents small-scale Rayleigh fading.  Let $\hat{\boldsymbol{h}}_u$ and $\boldsymbol{e}_u$ represent the channel estimate and the corresponding channel-estimation error, such that $\boldsymbol{e}_u = \boldsymbol{h}_u - \hat{\boldsymbol{h}}_u$. Assuming minimum mean square error (MMSE) estimation, $\hat{\boldsymbol{h}}_u$ and $\boldsymbol{e}_u$ are independent. Additionally, assuming all UEs transmit pilots with constant power $\vartheta_u$, $\hat{\boldsymbol{h}}_u \sim \mathcal{CN}(\boldsymbol{0}, \xi_u \boldsymbol{I}_{N_{\mathtt{r}}})$ and $\boldsymbol{e}_u \sim \mathcal{CN}(\boldsymbol{0}, \epsilon_u \boldsymbol{I}_{N_{\mathtt{r}}})$ where the estimation variance $\xi_u$ and error variance $\epsilon_u$ are given by~\cite{Nhan_2025}
\begin{align}
\xi_u &= \frac{\tau_p \vartheta_u \beta_u^2}{\tau_p \vartheta_u \beta_u + \sigma_c^2}, \quad \epsilon_u = \frac{\beta_u \sigma_c^2}{\tau_p \vartheta_u \beta_u + \sigma_c^2}, \quad \forall u \in \mathcal{K}.
\end{align}

\noindent\textit{2) FD-mMIMO ISCC communication signal model:} In phase $2$, MEC UEs ($k \in \mathcal{K}_{\mathtt{MEC}}$) offload task fractions $\varphi_k D_k$ by transmitting $x_k = \sqrt{p_k}s_k$, while conventional communication UEs ($k' \in \mathcal{K}_{\mathtt{com}}$) transmit $x_{k'} = \sqrt{p_{k'}}s_{k'}$. Here, $s_k, s_{k'} \sim \mathcal{CN}(0,1)$ represent normalized information symbols, with $p_k$ and $p_{k'}$ being the respective transmit powers. Concurrently, the AP transmits the sensing signal {$\boldsymbol{x}_{\mathtt{rad}} = \boldsymbol{w} s_{\mathtt{rad}}$, where $\boldsymbol{w} \in \mathbb{C}^{N_{\mathtt{t}} \times 1}$ is the beamforming vector, $\mathbb{E}[s_{\mathtt{rad}}] = 0$ and $\mathbb{E}[\vert{}s_{\mathtt{rad}}\vert{}^2] = 1$.} Note that $s_k$, $s_{k'}$, and $s_{\mathtt{rad}}$ are assumed to be mutually independent~\cite{Salem_2025}.

The composite signal received at the AP is given by~\cite{Salem_2025}
\begin{align}\label{eq:y_AP_original}
\boldsymbol{y}_{\mathtt{AP}} = &\underbrace{\sum_{k \in \mathcal{K}_{\mathtt{MEC}} } \boldsymbol{h}_k \sqrt{p_k}\, s_k}_{\text{MEC Offloading Signals}} + \underbrace{\sum_{k' \in \mathcal{K}_{\mathtt{com}}  } \boldsymbol{h}_{k'} \sqrt{p_{k'}}\, s_{k'}}_{\text{Conventional Uplink Signals}} \nonumber\\ 
&+\underbrace{\alpha \boldsymbol{G}(\theta,\phi)\boldsymbol{w} s_{\mathtt{rad}}}_{\text{Target Echo}} + \underbrace{\boldsymbol{H}_{\mathtt{SI}}\boldsymbol{w} s_{\mathtt{rad}}}_{\text{Residual SI}} + \boldsymbol{n}_c
\end{align}
where $\alpha$ is the reflection coefficient. The two-way target response matrix is $\boldsymbol{G}(\theta, \phi) = \boldsymbol{a}_{\mathtt{r}}(\theta, \phi)\boldsymbol{a}_{\mathtt{t}}^{\mathtt{H}}(\theta, \phi) \in \mathbb{C}^{N_{\mathtt{r}} \times N_{\mathtt{t}}}$, with azimuth $\theta \in [-\pi, \pi]$ and elevation $\phi \in [-\pi/2, \pi/2]$ of the target relative to the AP. Assuming a UPA on the Y-Z plane with half-wavelength spacing, {the steering vector $\boldsymbol{a}_{\mathtt{b}}(\theta, \phi)$ for $\mathtt{b} \in \{t, r\}$} (denoting transmission and reception, respectively) is given as
\begin{equation}\label{eq:steering_vector_definition}
    \boldsymbol{a}_{\mathtt{b}}(\theta, \phi) = \boldsymbol{a}_{\mathtt{b},\mathtt{h}}(\theta, \phi) \otimes \boldsymbol{a}_{\mathtt{b},\mathtt{v}}(\phi) \in \mathbb{C}^{N_{\mathtt{b}} \times 1}
\end{equation}
where $\otimes$ is the Kronecker product, and $\boldsymbol{a}_{\mathtt{b},\mathtt{h}}(\theta, \phi) \in \mathbb{C}^{N_{\mathtt{b},\mathtt{h}} \times 1}$ and $\boldsymbol{a}_{\mathtt{b},\mathtt{v}}(\phi) \in \mathbb{C}^{N_{\mathtt{b},\mathtt{v}} \times 1}$ denote the centered horizontal and vertical steering vectors of the UPA, whose elements are compactly expressed as
\begin{align}\label{eq:resonponse_vector_h}
    [\boldsymbol{a}_{\mathtt{b},\mathtt{h}}(\theta, \phi)]_n &= e^{\left[ j \pi \left(n - \frac{N_{\mathtt{b},\mathtt{h}} + 1}{2}\right) \sin\theta \sin\phi \right]}\\
    \label{eq:response_vector_v}
    [\boldsymbol{a}_{\mathtt{b},\mathtt{v}}(\phi)]_q &= e^{\left[ j \pi \left(q - \frac{N_{\mathtt{b},\mathtt{v}} + 1}{2}\right) \cos\phi \right]}
\end{align}
for $n = \{1, \dots, N_{\mathtt{b},\mathtt{h}}\}$ and $q = \{1, \dots, N_{\mathtt{b},\mathtt{v}}\}$, where $N_{\mathtt{b},\mathtt{h}}$ and $N_{\mathtt{b},\mathtt{v}}$ denote the numbers of antennas along the horizontal and vertical dimensions, respectively, satisfying $N_{\mathtt{b},\mathtt{h}}N_{\mathtt{b},\mathtt{v}} = N_{\mathtt{b}}$. 

In Eq. \eqref{eq:y_AP_original}, the RSI channel matrix after active cancellation is modeled as $\boldsymbol{H}_{\mathtt{SI}} = \sqrt{\chi_{\mathtt{SI}}} \boldsymbol{G}_{\mathtt{SI}}^{\mathtt{H}} \in \mathbb{C}^{N_{\mathtt{r}} \times N_{\mathtt{t}}}$ \cite{JSAC_FD18}, where $\chi_{\mathtt{SI}}$ is the RSI power factor. The fading matrix $\boldsymbol{G}_{\mathtt{SI}} \in \mathbb{C}^{N_{\mathtt{t}} \times N_{\mathtt{r}}}$ follows a Rician channel model with a Rician factor $\kappa_{\mathtt{SI}}$, given by
\begin{equation}\label{eq:G_si}
\boldsymbol{G}_{\mathtt{SI}} = \sqrt{\frac{\kappa_{\mathtt{SI}}}{\kappa_{\mathtt{SI}}+1}} \boldsymbol{G}_{\mathtt{SI}}^{\mathtt{LoS}} + \sqrt{\frac{1}{\kappa_{\mathtt{SI}}+1}} \boldsymbol{G}_{\mathtt{SI}}^{\mathtt{NLoS}}
\end{equation}
where $\boldsymbol{G}_{\mathtt{SI}}^{\mathtt{LoS}}$ is the deterministic line-of-sight (LoS) component, and $\boldsymbol{G}_{\mathtt{SI}}^{\mathtt{NLoS}}$ is the non-LoS (NLoS) component whose elements are independent and identically distributed (i.i.d.) $\mathcal{CN}(0, 1)$. Finally, $\boldsymbol{n}_c \sim \mathcal{CN}(\boldsymbol{0}, \sigma_c^2 \boldsymbol{I}_{N_{\mathtt{r}}})$ denotes the additive white Gaussian noise (AWGN).

To decode MEC data and suppress multi-user interference (MUI), the AP applies a linear ZF receiver, $\boldsymbol{U}_{\mathtt{ZF}} = \hat{\boldsymbol{H}}(\hat{\boldsymbol{H}}^{\mathtt{H}} \hat{\boldsymbol{H}})^{-1}$, where {$\hat{\boldsymbol{H}} = [\hat{\boldsymbol{h}}_1, \cdots, \hat{\boldsymbol{h}}_k, \cdots, \hat{\boldsymbol{h}}_K] \in \mathbb{C}^{N_{\mathtt{r}} \times K}$ denotes the estimated channel matrix}. The post-processed signal for MEC user $k$ is obtained by applying the $k$-th column of $\boldsymbol{U}_{\mathtt{ZF}}$, denoted as $\boldsymbol{u}_k$, to the received signal in~\eqref{eq:y_AP_original}: 
\begin{align}\label{eq:y_k_signal}
\hat{y}_k &= \boldsymbol{u}_k^{\mathtt{H}} \boldsymbol{y}_{\mathtt{AP}} = \sqrt{p_k} s_k + \!\!\!\!\sum_{j \in \mathcal{K}_{\mathtt{MEC}}} \!\!\!\! \boldsymbol{u}_k^{\mathtt{H}} \boldsymbol{e}_j \sqrt{p_j} s_j + \!\!\!\!\!\sum_{k' \in \mathcal{K}_{\mathtt{com}}} \!\!\!\! \boldsymbol{u}_k^{\mathtt{H}} \boldsymbol{e}_{k'} \sqrt{p_{k'}} s_{k'}   \nonumber\\
& + \boldsymbol{u}_k^{\mathtt{H}} \big( \alpha \boldsymbol{G}(\theta, \phi) + \boldsymbol{H}_{\mathtt{SI}} \big) \boldsymbol{w} s_{\mathtt{rad}} + \boldsymbol{u}_k^{\mathtt{H}} \boldsymbol{n}_c.
\end{align}
\subsection{Radar Sensing Model}
The FD-AP aims to estimate the target's angles, $\psi \in \{\theta, \phi\}$. Following~\cite{Jinho_ISAC_2026,Hassanien_radar_2010}, we focus on the steady-state target tracking phase, where the AP uses a prior location estimate to form a directional radar beam $\boldsymbol{w}$. 

 Let $\boldsymbol{v} \triangleq \text{vec}(\boldsymbol{G}(\theta, \phi)\boldsymbol{X}) \in \mathbb{C}^{N_{\mathtt{r}} L \times 1}$ denote the vectorized radar sensing signal, where  $\boldsymbol{X} = \boldsymbol{w} \boldsymbol{s}_{rad}^{\mathtt{T}} \in \mathbb{C}^{N_{\mathtt{t}} \times L}$ represents the overall radar transmit signal matrix over the $L$ snapshots~\cite{Liu_ISAC_2018}. Moreover, $\boldsymbol{s}_{rad} = \big[ s_{rad}(1), s_{rad}(2), \dots, s_{rad}(L) \big]^{\mathtt{T}} \in \mathbb{C}^{L \times 1}$ denotes the radar pulse sequence.

By vectorizing $L$ snapshots of $\boldsymbol{y}_{\mathtt{AP}}$, the equivalent radar signal is given by~\cite{Xu_ISAC_2022}:
\begin{align}\label{eq:ys}
    \boldsymbol{y}_s = \alpha \boldsymbol{v} + \boldsymbol{n}_{\mathtt{eq}}
\end{align}
where $\boldsymbol{n}_{\mathtt{eq}} \in \mathbb{C}^{N_{\mathtt{r}} L \times 1}$ represents the vectorized equivalent interference-plus-noise vector. Specifically, $\boldsymbol{n}_{\mathtt{eq}}$ is formed by stacking the spatial snapshot vectors $\boldsymbol{n}_{\mathtt{eq}}(l) \in \mathbb{C}^{N_{\mathtt{r}} \times 1}$ for $l \in \{1, 2, \dots, L\}$, where the instantaneous interference-plus-noise at the $l$-th snapshot is given as:
\begin{align}
    \boldsymbol{n}_{\mathtt{eq}}(l) &= \sum_{k \in \mathcal{K}_{\mathtt{MEC}}} \boldsymbol{h}_k \sqrt{p_k} s_k(l) + \sum_{k' \in \mathcal{K}_{\mathtt{com}}} \boldsymbol{h}_{k'} \sqrt{p_{k'}} s_{k'}(l) \nonumber \\ 
    &\quad + \boldsymbol{n}_{\mathtt{SI}}(l) + \boldsymbol{n}_c(l)
\end{align}
where the instantaneous RSI and AWGN vectors are $\boldsymbol{n}_{\mathtt{SI}}(l) = \boldsymbol{H}_{\mathtt{SI}} \boldsymbol{w} s_{\mathtt{rad}}(l) = \sqrt{\chi_{\mathtt{SI}}} \boldsymbol{G}_{\mathtt{SI}}^{\mathtt{H}} \boldsymbol{w} s_{\mathtt{rad}}(l)$ and $\boldsymbol{n}_c(l) \sim \mathcal{CN}(\boldsymbol{0}, \sigma_c^2 \boldsymbol{I}_{N_{\mathtt{r}}})$.

\section{Communication and Sensing Performance}\label{sec_System_analysis}
This section derives uplink rate lower bounds under ZF reception, along with sensing CRLBs to evaluate $\theta$ and $\phi$ estimation accuracy.
\subsection{Communication Performance}
 From~\eqref{eq:y_k_signal}, the achievable uplink rate of user $k$ is given as
{\small\begin{align}\label{eq:Rate_MC}
    R_k^\mathtt{cp} = B \log_2 \Big( 1  + \frac{\mathbb{E} \left[ |\sqrt{p_k} s_k|^2 \right]}{\mathbb{E}[I_k^{\mathtt{ce}}] + \mathbb{E}[I_k^{\mathtt{ec}}] + \mathbb{E}[I_k^{\mathtt{rsi}}] + \mathbb{E}[|\boldsymbol{u}_k^{\mathtt{H}} \boldsymbol{n}_c|^2] } \Big)
\end{align}}where $\rho = \Vert{}\boldsymbol{w}\Vert{}^2$ and $I_k^{\mathtt{ce}} \triangleq \sum_{j \in \mathcal{K}_{\mathtt{MEC}}} p_j |\boldsymbol{u}_k^{\mathtt{H}} \boldsymbol{e}_j|^2 + \sum_{k' \in \mathcal{K}_{\mathtt{com}}} p_{k'} |\boldsymbol{u}_k^{\mathtt{H}} \boldsymbol{e}_{k'}|^2$ denotes the inter-user interference caused by CSI estimation errors. Furthermore, $I_k^{\mathtt{ec}} \triangleq \rho |\alpha|^2 |\boldsymbol{u}_k^{\mathtt{H}} \boldsymbol{G}(\theta, \phi) \tilde{\boldsymbol{w}}|^2$ represents the instantaneous target-echo interference power, while $I_k^{\mathtt{rsi}} \triangleq \rho |\boldsymbol{u}_k^{\mathtt{H}} \boldsymbol{H}_{\mathtt{SI}} \tilde{\boldsymbol{w}}|^2$ denotes the structured RSI power resulting from loopback coupling. Similarly, we can obtain the communication-user uplink rate $R_{k'}^{\mathtt{cm}}$ by replacing $k \in \mathcal{K}_{\mathtt{MEC}}$ with $k' \in \mathcal{K}_{\mathtt{com}}$ in \eqref{eq:Rate_MC}. 

The following theorem presents the closed-form expression for the achievable uplink rate of the MEC UEs and communication UEs.
\begin{theorem}\label{theorem_uplink_rate}
Using the use-and-then-forget (UatF) bounding technique and the inverse Wishart property $1/\mathbb{E}[\|\boldsymbol{u}_k\|^2] = (N_{\mathtt{r}} - K)\xi_k$, the achievable closed-form offloading rate for the priority MEC user $k$ is given by
\begin{equation}\label{eq:up_link_rate}
    R_k^\mathtt{cp}(\boldsymbol{p}, \rho) = B \log_2 \left( 1 + \frac{p_k (N_{\mathtt{r}} - K) \xi_k}{I^{\mathtt{er}}(\boldsymbol{p}) + I^\mathtt{se}(\rho) + \sigma_c^2} \right)
\end{equation}
where $\boldsymbol{p} \triangleq \{p_k, p_{k'}\}_{\forall k, k' \in \mathcal{K}}$, $I^{\mathtt{er}}(\boldsymbol{p}) \triangleq \sum_{j \in \mathcal{K}_{\mathtt{MEC}} } p_j \epsilon_j + \sum_{k' \in \mathcal{K}_{\mathtt{com}} } p_{k'} \epsilon_{k'}$ represents the inter-user interference, and the co-site radar cross-interference factor $I^\mathtt{se}(\rho)$ is given as
\begin{equation}\label{eq:I_k_sen_final}
    I^\mathtt{se}(\rho) \triangleq \rho \left( |\alpha|^2 N_{\mathtt{t}} + \frac{\chi_{\mathtt{SI}} \kappa_{\mathtt{SI}} \mathcal{D}_{\mathtt{H}}(\theta, \phi) \mathcal{D}_{\mathtt{V}}(\phi)}{N_{\mathtt{t}} (\kappa_{\mathtt{SI}}+1)} + \frac{\chi_{\mathtt{SI}}}{\kappa_{\mathtt{SI}}+1} \right).
\end{equation}
Here, $\mathcal{D}_{\mathtt{H}}(\theta, \phi)$ and $\mathcal{D}_{\mathtt{V}}(\phi)$ denote the expanded spatial Dirichlet beamforming kernels for the horizontal and vertical transmit sub-arrays, respectively, defined as:
\begin{align}
    \mathcal{D}_{\mathtt{H}}(\theta, \phi) &\triangleq \left[ \frac{\sin\left(\frac{N_{\mathtt{t},\mathtt{h}} \pi}{2} (\sin\theta \sin\phi - \sin\theta_{\mathtt{SI}} \sin\phi_{\mathtt{SI}})\right)}{\sin\left(\frac{\pi}{2} (\sin\theta \sin\phi - \sin\theta_{\mathtt{SI}} \sin\phi_{\mathtt{SI}})\right)} \right]^2 \\
    \mathcal{D}_{\mathtt{V}}(\phi) &\triangleq \left[ \frac{\sin\left(\frac{N_{\mathtt{t},\mathtt{v}} \pi}{2} (\cos\phi - \cos\phi_{\mathtt{SI}})\right)}{\sin\left(\frac{\pi}{2} (\cos\phi - \cos\phi_{\mathtt{SI}})\right)} \right]^2
\end{align}
where $\theta_{\mathtt{SI}}$ and $\phi_{\mathtt{SI}}$ are the internal physical leakage angles from the transmit to the receive array panel at the full-duplex access point.

Symmetrically, the closed-form achievable uplink communication rate for user $k' \in \mathcal{K}_{\mathtt{com}}$ is given as
\begin{equation}\label{eq:comm_rate}
    R_{k'}^\mathtt{cm}(\boldsymbol{p}, \rho) = B \log_2 \left( 1 + \frac{p_{k'} (N_{\mathtt{r}} - K) \xi_{k'}}{I^{\mathtt{er}}(\boldsymbol{p}) + I^\mathtt{se}(\rho) + \sigma_c^2} \right).
\end{equation}
\end{theorem}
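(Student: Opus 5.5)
The plan is to start from the expression for $R_k^{\mathtt{cp}}$ in \eqref{eq:Rate_MC} and evaluate each expectation in the denominator separately. The UatF step will normalize everything by $\mathbb{E}[\|\boldsymbol{u}_k\|^2]$, so that each interference term appears as a power multiplied by $\mathbb{E}[\|\boldsymbol{u}_k\|^2]$.

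\textbf{Numerator and normalization.} The ZF property $\hat{\boldsymbol{H}}^{\mathtt{H}}\boldsymbol{U}_{\mathtt{ZF}}=\boldsymbol{I}_K$ makes the desired term deterministic, equal to $\sqrt{p_k}s_k$, with power $p_k$. Next, $\boldsymbol{u}_k^{\mathtt{H}}\boldsymbol{u}_k=[(\hat{\boldsymbol{H}}^{\mathtt{H}}\hat{\boldsymbol{H}})^{-1}]_{kk}$. Since $\hat{\boldsymbol{H}}$ has independent columns with distribution $\mathcal{CN}(\boldsymbol{0},\xi_u\boldsymbol{I})$, the inverse-Wishart mean gives $\mathbb{E}[\|\boldsymbol{u}_k\|^2]=1/((N_{\mathtt{r}}-K)\xi_k)$, which is the identity quoted in the theorem. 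After dividing numerator and denominator by $\mathbb{E}[\|\boldsymbol{u}_k\|^2]$, the numerator becomes $p_k(N_{\mathtt{r}}-K)\xi_k$.

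\textbf{Noise and channel-estimation error terms.} The noise term is immediate: $\mathbb{E}[|\boldsymbol{u}_k^{\mathtt{H}}\boldsymbol{n}_c|^2]=\sigma_c^2\mathbb{E}[\|\boldsymbol{u}_k\|^2]$. For the estimation-error terms, MMSE estimation makes $\boldsymbol{e}_j$ independent of $\hat{\boldsymbol{H}}$, and hence of $\boldsymbol{u}_k$. Conditioning on $\boldsymbol{u}_k$ then gives $\mathbb{E}[|\boldsymbol{u}_k^{\mathtt{H}}\boldsymbol{e}_j|^2]=\epsilon_j\mathbb{E}[\|\boldsymbol{u}_k\|^2]$. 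Summing over MEC and communication users produces $I^{\mathtt{er}}(\boldsymbol{p})\,\mathbb{E}[\|\boldsymbol{u}_k\|^2]$.

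\textbf{Sensing interference terms (main obstacle).} These are the terms $\mathbb{E}[I_k^{\mathtt{ec}}]$ and $\mathbb{E}[I_k^{\mathtt{rsi}}]$.

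For the echo, $\boldsymbol{G}\tilde{\boldsymbol{w}}=\boldsymbol{a}_{\mathtt{r}}(\boldsymbol{a}_{\mathtt{t}}^{\mathtt{H}}\tilde{\boldsymbol{w}})$. I would invoke the beam-alignment assumption $\tilde{\boldsymbol{w}}=\boldsymbol{a}_{\mathtt{t}}(\theta,\phi)/\sqrt{N_{\mathtt{t}}}$, which gives $|\boldsymbol{a}_{\mathtt{t}}^{\mathtt{H}}\tilde{\boldsymbol{w}}|^2=N_{\mathtt{t}}$. The remaining factor is $|\boldsymbol{u}_k^{\mathtt{H}}\boldsymbol{a}_{\mathtt{r}}|^2$. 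The distribution of $\hat{\boldsymbol{H}}$ is unitarily invariant, so $\boldsymbol{u}_k$ has an isotropic direction independent of its norm. Consequently $\mathbb{E}[|\boldsymbol{u}_k^{\mathtt{H}}\boldsymbol{a}_{\mathtt{r}}|^2]=\|\boldsymbol{a}_{\mathtt{r}}\|^2\mathbb{E}[\|\boldsymbol{u}_k\|^2]/N_{\mathtt{r}}=\mathbb{E}[\|\boldsymbol{u}_k\|^2]$. This yields the contribution $\rho|\alpha|^2N_{\mathtt{t}}$.

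For the RSI, I would split $\boldsymbol{G}_{\mathtt{SI}}$ as in \eqref{eq:G_si}. The cross term has zero mean and drops out.
\begin{itemize}
\item For the NLoS part, $\boldsymbol{G}_{\mathtt{SI}}^{\mathtt{NLoS},\mathtt{H}}\tilde{\boldsymbol{w}}\sim\mathcal{CN}(\boldsymbol{0},\boldsymbol{I}_{N_{\mathtt{r}}})$ is independent of $\boldsymbol{u}_k$, because $\|\tilde{\boldsymbol{w}}\|=1$. This gives $\chi_{\mathtt{SI}}/(\kappa_{\mathtt{SI}}+1)$.
\item For the LoS part, I would model $\boldsymbol{G}_{\mathtt{SI}}^{\mathtt{LoS}}=\boldsymbol{a}_{\mathtt{t}}(\theta_{\mathtt{SI}},\phi_{\mathtt{SI}})\boldsymbol{a}_{\mathtt{r}}^{\mathtt{H}}(\cdot)$. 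The Kronecker structure \eqref{eq:steering_vector_definition} then factorizes $|\boldsymbol{a}_{\mathtt{t}}^{\mathtt{H}}(\theta_{\mathtt{SI}},\phi_{\mathtt{SI}})\boldsymbol{a}_{\mathtt{t}}(\theta,\phi)|^2$ into a horizontal and a vertical geometric sum. Summing these centered geometric series gives exactly the Dirichlet kernels $\mathcal{D}_{\mathtt{H}}\mathcal{D}_{\mathtt{V}}$.
\item Dividing by $N_{\mathtt{t}}$ (from $\tilde{\boldsymbol{w}}$) and applying the same isotropy argument to the receive steering vector yields the middle term of \eqref{eq:I_k_sen_final}.
\end{itemize}

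I expect the delicate points to be justifying the isotropy and independence of $\boldsymbol{u}_k$ from the deterministic steering vectors, and pinning down the LoS SI model. Once those are settled, collecting all terms gives \eqref{eq:up_link_rate}. The communication-user rate \eqref{eq:comm_rate} follows by the identical argument with $k'$ in place of $k$.
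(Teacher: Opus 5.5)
Your proposal is correct and follows the paper's Appendix~A almost step for step. You normalize every term by $\mathbb{E}[\|\boldsymbol{u}_k\|^2]$ and treat the NLoS leakage via its i.i.d. entries. For the echo and the LoS leakage you use the isotropic direction of $\boldsymbol{u}_k$ with the aligned beam $\tilde{\boldsymbol{w}}=\boldsymbol{a}_{\mathtt{t}}/\sqrt{N_{\mathtt{t}}}$ and factor the LoS term via Kronecker structure into $\mathcal{D}_{\mathtt{H}}\mathcal{D}_{\mathtt{V}}/N_{\mathtt{t}}$, then close with the inverse-Wishart mean. The only difference is a small improvement: your unitary-invariance argument makes $\mathbb{E}[\bar{\boldsymbol{u}}_k\bar{\boldsymbol{u}}_k^{\mathtt{H}}]=\boldsymbol{I}_{N_{\mathtt{r}}}/N_{\mathtt{r}}$ exact, whereas the paper invokes it only ``for large $N_{\mathtt{r}}$''.
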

\begin{proof}
    See Appendix \ref{proof_uplink_rate}.
\end{proof}

Eqs.~\eqref{eq:up_link_rate} and \eqref{eq:comm_rate} show that increasing radar power raises RSI and target-echo interference, reducing uplink rates.
\begin{remark}\label{remark_dof_limits}
For fixed $K$, $N_{\mathtt{t}}$, pilot powers, and radar power, the rate in \eqref{eq:up_link_rate} grows as $B\log_2 N_{\mathtt{r}}+\mathcal{O}(1)$ when user powers are fixed. If all user powers instead scale as $p_u=P_u^{\mathtt{ref}}/N_{\mathtt{r}}$, $u\in\mathcal K$, then \cite{Ngo2013}
\begin{equation}\label{eq:ultimate_upper_bound}
\bar R_k^{\mathtt{cp},\infty}(\rho)=B\log_2\left(1+\frac{P_k^{\mathtt{ref}}\xi_k}{I^{\mathtt{se}}(\rho)+\sigma_c^2}\right).
\end{equation}
Thus, radar interference remains in the limiting rate under this power-scaling regime.
\end{remark}
\begin{proof}
See Appendix~\ref{proof_remark_dof}.
\end{proof}
\subsection{Sensing Performance}
We first characterize the interference-plus-noise covariance, which is required to derive the angular CRLBs. Lemma \ref{LM_lemma1} gives this covariance under the adopted interference assumptions.

\textit{1) Multi-source interference:} To evaluate the ultimate sensing limits, Lemma~\ref{LM_lemma1} captures the spatial correlation of this interference via the equivalent covariance matrix of $\boldsymbol{n}_{\mathtt{eq}}$.
\begin{lemma}\label{LM_lemma1}
Due to the intractability of colored RSI in FD-mMIMO ISCC systems, we upper-bound it as an equivalent spatially white Gaussian process \cite{Salem_2025}. Over a large coherence processing interval ($L \gg 1$), the aggregate interference-plus-noise covariance matrix simplifies to a scaled identity:
\begin{equation}\label{eq:lemma1}
    \boldsymbol{R}_{\mathtt{in}} \triangleq \mathbb{E}\{\boldsymbol{n}_{\mathtt{eq}}\boldsymbol{n}_{\mathtt{eq}}^{\mathtt{H}}\} = \left( I^{\mathtt{up}}(\boldsymbol{p}) + \sigma_{\mathtt{rad}}^2(\rho) \right) \boldsymbol{I}_{N_{\mathtt{r}} L}
\end{equation}
where $I^{\mathtt{up}}(\boldsymbol{p}) \triangleq \sum_{k \in \mathcal{K}_{\mathtt{MEC}}} p_k \beta_k + \sum_{k' \in \mathcal{K}_{\mathtt{com}}} p_{k'} \beta_{k'}$ and $\sigma_{\mathtt{rad}}^2(\rho) \triangleq \chi_{\mathtt{SI}} \rho + \sigma_c^2$ denote the aggregate uplink interference and the effective radar noise variance, respectively.
\end{lemma}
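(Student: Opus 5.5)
We compute $\boldsymbol{R}_{\mathtt{in}}$ block by block. The $(l,l')$ block of the $N_{\mathtt{r}}L\times N_{\mathtt{r}}L$ matrix is $\mathbb{E}\{\boldsymbol{n}_{\mathtt{eq}}(l)\boldsymbol{n}_{\mathtt{eq}}^{\mathtt{H}}(l')\}$. We expand each snapshot into its four constituents: the MEC uplink term, the conventional uplink term, $\boldsymbol{n}_{\mathtt{SI}}(l)$, and $\boldsymbol{n}_c(l)$.

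First, the cross terms vanish. The symbols $s_k(l)$, $s_{k'}(l)$ and $s_{\mathtt{rad}}(l)$ are mutually independent and zero mean, and the noise is independent of everything else. Hence the expectation of any product of two different sources is zero. The same holds for the product of one source at two different snapshots $l\neq l'$, because the information symbols are i.i.d. across snapshots. For the radar sequence, we use the large-$L$ assumption so that $\frac{1}{L}\boldsymbol{s}_{rad}\boldsymbol{s}_{rad}^{\mathtt{H}}\to\boldsymbol{I}_L$, which makes its cross-snapshot correlations negligible. The result is a block-diagonal matrix with identical diagonal blocks $\boldsymbol{R}_0=\mathbb{E}\{\boldsymbol{n}_{\mathtt{eq}}(l)\boldsymbol{n}_{\mathtt{eq}}^{\mathtt{H}}(l)\}$, so $\boldsymbol{R}_{\mathtt{in}}=\boldsymbol{I}_L\otimes\boldsymbol{R}_0$.

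Second, we evaluate $\boldsymbol{R}_0$ term by term.
\begin{itemize}
\item \emph{Uplink terms.} Averaging over Rayleigh fading with $\mathbb{E}\{\boldsymbol{h}_u\boldsymbol{h}_u^{\mathtt{H}}\}=\beta_u\boldsymbol{I}_{N_{\mathtt{r}}}$ and $\mathbb{E}|s_u|^2=1$ gives $\sum_{k}p_k\beta_k\boldsymbol{I}+\sum_{k'}p_{k'}\beta_{k'}\boldsymbol{I}=I^{\mathtt{up}}(\boldsymbol{p})\boldsymbol{I}_{N_{\mathtt{r}}}$.
\item \emph{Noise.} It contributes $\sigma_c^2\boldsymbol{I}_{N_{\mathtt{r}}}$ directly.
\item \emph{RSI.} It gives $\chi_{\mathtt{SI}}\mathbb{E}\{\boldsymbol{G}_{\mathtt{SI}}^{\mathtt{H}}\boldsymbol{w}\boldsymbol{w}^{\mathtt{H}}\boldsymbol{G}_{\mathtt{SI}}\}$.
\end{itemize}
We split the RSI expectation using the Rician decomposition \eqref{eq:G_si}. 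The LoS and NLoS cross terms vanish because the NLoS part has zero mean. The NLoS part yields $\frac{\chi_{\mathtt{SI}}}{\kappa_{\mathtt{SI}}+1}\|\boldsymbol{w}\|^2\boldsymbol{I}_{N_{\mathtt{r}}}=\frac{\chi_{\mathtt{SI}}\rho}{\kappa_{\mathtt{SI}}+1}\boldsymbol{I}$, since its entries are i.i.d. $\mathcal{CN}(0,1)$. The LoS part is a rank-one, colored matrix $\frac{\chi_{\mathtt{SI}}\kappa_{\mathtt{SI}}}{\kappa_{\mathtt{SI}}+1}\boldsymbol{G}_{\mathtt{SI}}^{\mathtt{LoS},\mathtt{H}}\boldsymbol{w}\boldsymbol{w}^{\mathtt{H}}\boldsymbol{G}_{\mathtt{SI}}^{\mathtt{LoS}}$.

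Whitening this colored component is the main obstacle, and it is where the lemma's stated modeling step enters. Following \cite{Salem_2025}, we replace the RSI by a spatially white Gaussian process of the worst-case per-antenna power. Treating interference as Gaussian with a covariance that dominates the true one can only increase the CRLB, so the substitution is conservative. We bound the LoS term via $\boldsymbol{A}\preceq\lambda_{\max}(\boldsymbol{A})\boldsymbol{I}$, and we bound the total RSI power by normalizing each entry of $\boldsymbol{G}_{\mathtt{SI}}$ to unit average power. This gives $\kappa_{\mathtt{SI}}/(\kappa_{\mathtt{SI}}+1)+1/(\kappa_{\mathtt{SI}}+1)=1$, so the RSI covariance is at most $\chi_{\mathtt{SI}}\rho\,\boldsymbol{I}_{N_{\mathtt{r}}}$. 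If the eigenvalue bound alone does not reproduce exactly $\chi_{\mathtt{SI}}\rho$, we instead adopt the equal-power spatially white surrogate as the modeling assumption, as the statement's phrase ``upper-bound it as an equivalent spatially white Gaussian process'' permits, and check that the two RSI components sum to $\chi_{\mathtt{SI}}\rho$.

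Collecting the terms gives $\boldsymbol{R}_0=\big(I^{\mathtt{up}}(\boldsymbol{p})+\chi_{\mathtt{SI}}\rho+\sigma_c^2\big)\boldsymbol{I}_{N_{\mathtt{r}}}$. Combined with $\boldsymbol{R}_{\mathtt{in}}=\boldsymbol{I}_L\otimes\boldsymbol{R}_0$, this yields \eqref{eq:lemma1}.
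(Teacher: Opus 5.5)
Your skeleton matches the paper's proof step for step. Temporal uncorrelatedness gives $\boldsymbol{R}_{\mathtt{in}}=\boldsymbol{I}_L\otimes\boldsymbol{R}_0$. The uplink and noise terms give $I^{\mathtt{up}}(\boldsymbol{p})\boldsymbol{I}_{N_{\mathtt{r}}}$ and $\sigma_c^2\boldsymbol{I}_{N_{\mathtt{r}}}$. The Rician cross terms vanish, and the NLoS part gives $\frac{\chi_{\mathtt{SI}}\rho}{\kappa_{\mathtt{SI}}+1}\boldsymbol{I}_{N_{\mathtt{r}}}$.

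\textbf{The LoS step fails as a bound.} The LoS term equals $\frac{\chi_{\mathtt{SI}}\kappa_{\mathtt{SI}}}{\kappa_{\mathtt{SI}}+1}|\boldsymbol{a}_{\mathtt{t}}^{\mathtt{H}}(\theta_{\mathtt{SI}},\phi_{\mathtt{SI}})\boldsymbol{w}|^2\,\boldsymbol{a}_{\mathtt{r}}\boldsymbol{a}_{\mathtt{r}}^{\mathtt{H}}$. Its largest eigenvalue is $N_{\mathtt{r}}$ times its per-antenna power. With the aligned beam, $|\boldsymbol{a}_{\mathtt{t}}^{\mathtt{H}}(\theta_{\mathtt{SI}},\phi_{\mathtt{SI}})\boldsymbol{w}|^2=\rho\,\mathcal{D}_{\mathtt{H}}(\theta,\phi)\mathcal{D}_{\mathtt{V}}(\phi)/N_{\mathtt{t}}$, which can reach $\rho N_{\mathtt{t}}$. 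So $\boldsymbol{A}\preceq\lambda_{\max}(\boldsymbol{A})\boldsymbol{I}$ gives a white level up to $N_{\mathtt{t}}N_{\mathtt{r}}$ times larger than $\frac{\chi_{\mathtt{SI}}\kappa_{\mathtt{SI}}\rho}{\kappa_{\mathtt{SI}}+1}$, not that value.

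The unit average power of the entries of $\boldsymbol{G}_{\mathtt{SI}}$ does not rescue this. The LoS contribution is a coherent sum over transmit antennas, so it is set by the inner product above, not by the entry normalization.

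More basically, no white matrix with the same per-antenna power dominates a rank-one covariance. The surrogate you end up with is therefore not a PSD upper bound, and your ``can only increase the CRLB'' argument does not apply to it. For the trace-matched surrogate, it is actually optimistic whenever a derivative direction $\dot{\boldsymbol{a}}_{\mathtt{r},\psi}$ has a large enough component along $\boldsymbol{a}_{\mathtt{r}}(\theta_{\mathtt{SI}},\phi_{\mathtt{SI}})$.

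\textbf{What the paper actually does.} Your fallback is the paper's approach, despite the lemma's ``upper-bound'' wording, but you must name the assumptions that yield exactly $\chi_{\mathtt{SI}}\rho$. An ``equal-power'' white surrogate alone would give $\frac{\chi_{\mathtt{SI}}\kappa_{\mathtt{SI}}}{\kappa_{\mathtt{SI}}+1}\rho\,\mathcal{D}_{\mathtt{H}}\mathcal{D}_{\mathtt{V}}/N_{\mathtt{t}}$ for the LoS part. The paper invokes two explicit approximations. The first is $\boldsymbol{a}_{\mathtt{r}}\boldsymbol{a}_{\mathtt{r}}^{\mathtt{H}}\approx\boldsymbol{I}_{N_{\mathtt{r}}}$, a trace-preserving whitening. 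The second is the uncorrelated-leakage assumption $|\boldsymbol{a}_{\mathtt{t}}^{\mathtt{H}}\boldsymbol{w}|^2\approx\rho$, which discards exactly the Dirichlet factor that Theorem~1 retains. With these two, the RSI components sum to $\chi_{\mathtt{SI}}\rho$, and the lemma holds as a modeling approximation, not a rigorous bound.

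\textbf{The radar cross-snapshot claim.} $\frac{1}{L}\boldsymbol{s}_{rad}\boldsymbol{s}_{rad}^{\mathtt{H}}\to\boldsymbol{I}_L$ is false, because a rank-one $L\times L$ matrix cannot approach the identity. What you need, and what the paper assumes, is $\mathbb{E}[s_{\mathtt{rad}}(l)s_{\mathtt{rad}}^*(l')]=\delta_{ll'}$, i.e., temporally uncorrelated pulses.
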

\begin{proof}
See Appendix~\ref{app:proofoflemma1}.
\end{proof}
Eq. \eqref{eq:lemma1} reveals that the FD-mMIMO interference scales with the number of UEs $K$ and the RSI level $\chi_{\mathtt{SI}}$. Using the interference covariance in (17), we derive the CRLBs for the target angles $(\theta, \phi)$.

\textit{2) The CRLB analysis:}  Let $\boldsymbol{\omega} \triangleq [\theta, \phi, \bar{\boldsymbol{\alpha}}^{\mathtt{T}}]^{\mathtt{T}} \in \mathbb{R}^{4 \times 1}$ be the vector of unknown parameters to be estimated from \eqref{eq:ys}, where $\bar{\boldsymbol{\alpha}} = [\Re(\alpha), \Im(\alpha)]^{\mathtt{T}}$. The maximum likelihood estimate (MLE) of $\boldsymbol{\omega}$ is given by
\begin{equation}\label{eq:MLE_w}
    \boldsymbol{\omega}_{\mathtt{MLE}} = \arg\min_{\boldsymbol{\omega}} \|\boldsymbol{y}_s - \alpha \boldsymbol{v}\|^2.
\end{equation}
The Fisher information matrix (FIM) for estimating $\boldsymbol{\omega}$, denoted by $\boldsymbol{J}_{\boldsymbol{\omega}}$, is derived from the log-likelihood function and is governed by the inverse covariance $\boldsymbol{R}_{\mathtt{in}}^{-1}$ as \cite{Liu2021CRB}:
\begin{align}\label{eq:fisher_information_matrix}
   \boldsymbol{J}_{\boldsymbol{\omega}} = 2 \Re \left( \frac{\partial (\alpha\boldsymbol{v})^{\mathtt{H}}}{\partial \boldsymbol{\omega}} \boldsymbol{R}_{\mathtt{in}}^{-1} \frac{\partial (\alpha\boldsymbol{v})}{\partial \boldsymbol{\omega}} \right).
\end{align}
Substituting \eqref{eq:lemma1} into~\eqref{eq:fisher_information_matrix} and factoring out $L$, we isolate the power-dependent scaling factor $\Psi(\boldsymbol{p}, \rho)$ as follows:
\begin{equation}
    \Psi(\boldsymbol{p}, \rho) \triangleq \frac{2L}{I^{\mathtt{up}}(\boldsymbol{p}) + \sigma_{\mathtt{rad}}^2(\rho)}.
\end{equation}
 {Accordingly, we partition $\boldsymbol{J}_{\boldsymbol{\omega}}$ into blocks as \cite{Bekkerman2006MIMO}:}
\begin{align}\label{eq:FIM_elements}
    \boldsymbol{J}_{\boldsymbol{\omega}} = \begin{bmatrix} J_{\theta\theta} & J_{\theta\phi} & \boldsymbol{j}_{\theta\bar{\boldsymbol{\alpha}}} \\ J_{\theta\phi} & J_{\phi\phi} & \boldsymbol{j}_{\phi\bar{\boldsymbol{\alpha}}} \\ \boldsymbol{j}_{\theta\bar{\boldsymbol{\alpha}}}^{\mathtt{T}} & \boldsymbol{j}_{\phi\bar{\boldsymbol{\alpha}}}^{\mathtt{T}} & \boldsymbol{J}_{\bar{\boldsymbol{\alpha}}\bar{\boldsymbol{\alpha}}} \end{bmatrix}. 
\end{align}
Letting $\tilde{\boldsymbol{w}} \triangleq \boldsymbol{w}/\sqrt{\rho}$ denote the normalized sensing direction, the rank-one transmit covariance is $\boldsymbol{w} \boldsymbol{w}^{\mathtt{H}} = \rho \tilde{\boldsymbol{w}} \tilde{\boldsymbol{w}}^{\mathtt{H}}$. Aggregating over $L$ snapshots, each FIM sub-block factorizes as $J_{xy} = \Psi(\boldsymbol{p}, \rho) \rho \tilde{c}_{xy}(\tilde{\boldsymbol{w}})$, where 
\begin{align}
    &\tilde{c}_{\theta\theta}(\tilde{\boldsymbol{w}}) \triangleq |\alpha|^2 \text{Tr}(\dot{\boldsymbol{G}}_\theta \tilde{\boldsymbol{w}} \tilde{\boldsymbol{w}}^{\mathtt{H}} \dot{\boldsymbol{G}}_\theta^{\mathtt{H}}) \label{eq:C_theta_theta}  \\
    &\tilde{c}_{\phi\phi}(\tilde{\boldsymbol{w}}) \triangleq |\alpha|^2 \text{Tr}(\dot{\boldsymbol{G}}_\phi \tilde{\boldsymbol{w}} \tilde{\boldsymbol{w}}^{\mathtt{H}} \dot{\boldsymbol{G}}_\phi^{\mathtt{H}})\label{eq:c_phi_phi}  \\
    &\tilde{c}_{\theta\phi}(\tilde{\boldsymbol{w}}) \triangleq \Re\left( |\alpha|^2 \text{Tr}(\dot{\boldsymbol{G}}_\theta \tilde{\boldsymbol{w}} \tilde{\boldsymbol{w}}^{\mathtt{H}} \dot{\boldsymbol{G}}_\phi^{\mathtt{H}}) \right)\label{eq:c_theta_phi} \\
    &\boldsymbol{\tilde{c}}_{\bar{\boldsymbol{\alpha}}\bar{\boldsymbol{\alpha}}}(\tilde{\boldsymbol{w}}) \triangleq \text{Tr}(\boldsymbol{G} \tilde{\boldsymbol{w}} \tilde{\boldsymbol{w}}^{\mathtt{H}} \boldsymbol{G}^{\mathtt{H}}) \boldsymbol{I}_2\label{eq:c_alpha_alpha}  \\
    &\boldsymbol{\tilde{c}}_{\theta\bar{\boldsymbol{\alpha}}}(\tilde{\boldsymbol{w}}) \triangleq \Re\left( \alpha^* \text{Tr}(\boldsymbol{G} \tilde{\boldsymbol{w}} \tilde{\boldsymbol{w}}^{\mathtt{H}} \dot{\boldsymbol{G}}_\theta^{\mathtt{H}})[1, j] \right)\label{eq:c_theta_alpha}  \\
    &\boldsymbol{\tilde{c}}_{\phi\bar{\boldsymbol{\alpha}}}(\tilde{\boldsymbol{w}}) \triangleq \Re\left( \alpha^* \text{Tr}(\boldsymbol{G} \tilde{\boldsymbol{w}} \tilde{\boldsymbol{w}}^{\mathtt{H}} \dot{\boldsymbol{G}}_\phi^{\mathtt{H}})[1, j] \right)\label{eq:c_phi_alpha} 
\end{align}
and $\dot{\boldsymbol{G}}_\theta = \dot{\boldsymbol{a}}_{\mathtt{r},\theta} \boldsymbol{a}_{\mathtt{t}}^{\mathtt{H}} + \boldsymbol{a}_{\mathtt{r}} \dot{\boldsymbol{a}}_{\mathtt{t},\theta}^{\mathtt{H}}$ and $\dot{\boldsymbol{G}}_\phi = \dot{\boldsymbol{a}}_{\mathtt{r},\phi} \boldsymbol{a}_{\mathtt{t}}^{\mathtt{H}} + \boldsymbol{a}_{\mathtt{r}} \dot{\boldsymbol{a}}_{\mathtt{t},\phi}^{\mathtt{H}}$. 

Applying the Schur complement to \eqref{eq:FIM_elements}, the equivalent angular FIM $\boldsymbol{J}_{\theta,\phi} \in \mathbb{R}^{2\times 2}$ is \cite{Huang_ISAC_2022}:
\begin{align}\label{eq:equip_FIM}
    \boldsymbol{J}_{\theta,\phi} = \begin{bmatrix} J_{\theta\theta} & J_{\theta\phi} \\ J_{\theta\phi} & J_{\phi\phi} - \boldsymbol{j}_{\phi\bar{\boldsymbol{\alpha}}} \boldsymbol{J}_{\bar{\boldsymbol{\alpha}}\bar{\boldsymbol{\alpha}}}^{-1} \boldsymbol{j}_{\phi\bar{\boldsymbol{\alpha}}}^{\mathtt{T}} \end{bmatrix}.
\end{align}
The diagonal entries of the inverse angular FIM $\boldsymbol{J}_{\theta,\phi}^{-1}$ give the CRLBs for unbiased angle estimators. Theorem 2 presents the resulting expressions.
\begin{theorem}\label{theorem_CRLB}
   The CRLBs for $\theta$ and $\phi$ in the FD-mMIMO ISCC system admit the following closed-form expressions:
    \begin{align}
        \text{CRLB}_\theta(\boldsymbol{p}, \rho) &= \frac{I^{\mathtt{up}}(\boldsymbol{p}) + \sigma_{\mathtt{rad}}^2(\rho)}{2L \rho} \mathcal{G}_\theta(\tilde{\boldsymbol{w}}) \label{eq:CRLB_theta} \\
        \text{CRLB}_\phi(\boldsymbol{p}, \rho) &= \frac{I^{\mathtt{up}}(\boldsymbol{p}) + \sigma_{\mathtt{rad}}^2(\rho)}{2L \rho} \mathcal{G}_\phi(\tilde{\boldsymbol{w}}) \label{eq:CRLB_phi}
    \end{align}
    where $\mathcal{G}_\theta(\tilde{\boldsymbol{w}})$ and $\mathcal{G}_\phi(\tilde{\boldsymbol{w}})$ are power-independent geometric multipliers defined by
    \begin{align} 
        \mathcal{G}_\theta(\tilde{\boldsymbol{w}}) &\triangleq \left( \tilde{c}_{\theta\theta}(\tilde{\boldsymbol{w}}) - \frac{|\tilde{c}_{\theta\phi}(\tilde{\boldsymbol{w}})|^2}{\tilde{c}_{\phi\phi}(\tilde{\boldsymbol{w}}) - \frac{\|\tilde{\boldsymbol{c}}_{\phi\bar{\boldsymbol{\alpha}}}(\tilde{\boldsymbol{w}})\|^2}{\tilde{c}_{\bar{\boldsymbol{\alpha}}\bar{\boldsymbol{\alpha}}}(\tilde{\boldsymbol{w}})}} \right)^{-1} \label{eq:steering_vector_theta} \\
        \mathcal{G}_\phi(\tilde{\boldsymbol{w}}) &\triangleq \left( \tilde{c}_{\phi\phi}(\tilde{\boldsymbol{w}}) - \frac{|\tilde{c}_{\theta\phi}(\tilde{\boldsymbol{w}})|^2}{\tilde{c}_{\theta\theta}(\tilde{\boldsymbol{w}}) - \frac{\|\tilde{\boldsymbol{c}}_{\theta\bar{\boldsymbol{\alpha}}}(\tilde{\boldsymbol{w}})\|^2}{\tilde{c}_{\bar{\boldsymbol{\alpha}}\bar{\boldsymbol{\alpha}}}(\tilde{\boldsymbol{w}})}} \right)^{-1}. \label{eq:steering_vector_phi}
    \end{align}
\end{theorem}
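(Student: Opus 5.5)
Using Lemma~\ref{LM_lemma1}, we substitute $\boldsymbol{R}_{\mathtt{in}}^{-1}=(I^{\mathtt{up}}(\boldsymbol{p})+\sigma_{\mathtt{rad}}^2(\rho))^{-1}\boldsymbol{I}_{N_{\mathtt{r}}L}$ into \eqref{eq:fisher_information_matrix}. The FIM then becomes $\frac{2}{I^{\mathtt{up}}+\sigma_{\mathtt{rad}}^2}\Re\big(\partial(\alpha\boldsymbol{v})^{\mathtt{H}}\partial(\alpha\boldsymbol{v})\big)$. Because $\boldsymbol{v}=\mathrm{vec}(\boldsymbol{G}\boldsymbol{w}\boldsymbol{s}_{rad}^{\mathtt{T}})$, every inner product of derivatives has the form $\mathrm{Tr}(\boldsymbol{A}\boldsymbol{w}\boldsymbol{s}_{rad}^{\mathtt{T}}\boldsymbol{s}_{rad}^*\boldsymbol{w}^{\mathtt{H}}\boldsymbol{B}^{\mathtt{H}})$ with $\boldsymbol{A},\boldsymbol{B}\in\{\boldsymbol{G},\dot{\boldsymbol{G}}_\theta,\dot{\boldsymbol{G}}_\phi\}$. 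For $L\gg 1$ we replace $\|\boldsymbol{s}_{rad}\|^2$ by $L$, using $\mathbb{E}|s_{\mathtt{rad}}|^2=1$. Writing $\boldsymbol{w}\boldsymbol{w}^{\mathtt{H}}=\rho\tilde{\boldsymbol{w}}\tilde{\boldsymbol{w}}^{\mathtt{H}}$ then gives exactly the factorization $J_{xy}=\Psi(\boldsymbol{p},\rho)\rho\,\tilde{c}_{xy}(\tilde{\boldsymbol{w}})$. To confirm this, we differentiate $\alpha\boldsymbol{v}$ with respect to $\theta$ and $\phi$, which gives $\alpha\,\mathrm{vec}(\dot{\boldsymbol{G}}_\psi\boldsymbol{X})$. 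We also differentiate with respect to $\Re\alpha$ and $\Im\alpha$, which gives $\boldsymbol{v}$ and $j\boldsymbol{v}$. These derivatives reproduce \eqref{eq:C_theta_theta}--\eqref{eq:c_phi_alpha}. In particular, the $2\times2$ amplitude block is $\tilde c_{\bar{\boldsymbol\alpha}\bar{\boldsymbol\alpha}}\boldsymbol{I}_2$, since $\Re(j)=0$ and $\Re(j^*j)=1$.

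The key structural observation is that every block of $\boldsymbol{J}_{\boldsymbol{\omega}}$ shares the common scalar $s\triangleq\Psi\rho=2L\rho/(I^{\mathtt{up}}+\sigma_{\mathtt{rad}}^2)$. We can therefore write $\boldsymbol{J}_{\boldsymbol{\omega}}=s\,\tilde{\boldsymbol{C}}(\tilde{\boldsymbol{w}})$ and $\boldsymbol{J}_{\boldsymbol{\omega}}^{-1}=s^{-1}\tilde{\boldsymbol{C}}^{-1}$. The power dependence then factors out entirely as the prefactor $(I^{\mathtt{up}}+\sigma^2_{\mathtt{rad}})/(2L\rho)$ in \eqref{eq:CRLB_theta}--\eqref{eq:CRLB_phi}. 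What remains is purely geometric: computing the $(1,1)$ and $(2,2)$ entries of $\tilde{\boldsymbol{C}}^{-1}$.

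For the geometric part, we compute $[\tilde{\boldsymbol C}^{-1}]_{11}$ by successive Schur complements. First we eliminate the nuisance block $\bar{\boldsymbol\alpha}$, as in \eqref{eq:equip_FIM}. Since $\tilde{\boldsymbol c}_{\bar{\boldsymbol\alpha}\bar{\boldsymbol\alpha}}$ is a scaled identity, the $(2,2)$ correction is $\|\tilde{\boldsymbol c}_{\phi\bar{\boldsymbol\alpha}}\|^2/\tilde c_{\bar{\boldsymbol\alpha}\bar{\boldsymbol\alpha}}$. Then we invert the $2\times2$ matrix through the Schur complement of its $(2,2)$ entry, which gives $[\cdot]_{11}=\big(\tilde c_{\theta\theta}-|\tilde c_{\theta\phi}|^2/(\tilde c_{\phi\phi}-\|\tilde{\boldsymbol c}_{\phi\bar{\boldsymbol\alpha}}\|^2/\tilde c_{\bar{\boldsymbol\alpha}\bar{\boldsymbol\alpha}})\big)^{-1}=\mathcal{G}_\theta$. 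By symmetry, swapping the roles of $\theta$ and $\phi$ gives $\mathcal{G}_\phi$.

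The main obstacle is reconciling this with the exact Schur complement. The exact computation also produces $\tilde{\boldsymbol c}_{\theta\bar{\boldsymbol\alpha}}$ terms in the $(1,1)$ and off-diagonal entries, whereas \eqref{eq:equip_FIM} keeps only the $\phi$-$\alpha$ coupling. We would handle this by appealing to the beam-alignment and centered-array assumptions. With a centered UPA and $\tilde{\boldsymbol w}\propto\boldsymbol a_{\mathtt t}$ aligned to the prior estimate, the terms $\boldsymbol a^{\mathtt H}\dot{\boldsymbol a}_\psi$ are purely imaginary sums of centered indices, and these vanish. As a result, $\mathrm{Tr}(\boldsymbol G\tilde{\boldsymbol w}\tilde{\boldsymbol w}^{\mathtt H}\dot{\boldsymbol G}_\psi^{\mathtt H})\approx 0$ for the cross terms. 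We would check which cross terms vanish exactly and which are dropped as the approximation stated in the theorem, and we would accept the stated form under those assumptions.
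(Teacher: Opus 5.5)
Your route is the paper's. Its proof likewise takes the reduced angular FIM \eqref{eq:equip_FIM} and uses $J_{xy}=\Psi(\boldsymbol p,\rho)\rho\,\tilde c_{xy}(\tilde{\boldsymbol w})$ to pull out $[\Psi(\boldsymbol p,\rho)\rho]^{-1}=(I^{\mathtt{up}}(\boldsymbol p)+\sigma_{\mathtt{rad}}^2(\rho))/(2L\rho)$. It then reads $\mathcal G_\theta,\mathcal G_\phi$ off a $2\times 2$ inverse. Your first paragraph justifies that factorization, which the paper only asserts: the derivatives with respect to $\theta,\phi,\Re\alpha,\Im\alpha$, the replacement $\|\boldsymbol s_{rad}\|^2\approx L$, and $\boldsymbol w\boldsymbol w^{\mathtt H}=\rho\tilde{\boldsymbol w}\tilde{\boldsymbol w}^{\mathtt H}$.

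The obstacle you raise is real, and the paper's proof does not address it; it simply inverts \eqref{eq:equip_FIM}. The exact Schur complement of \eqref{eq:FIM_elements} also subtracts $\|\tilde{\boldsymbol c}_{\theta\bar{\boldsymbol\alpha}}\|^2/\tilde c_{\bar{\boldsymbol\alpha}\bar{\boldsymbol\alpha}}$ from the $(1,1)$ entry and $\tilde{\boldsymbol c}_{\theta\bar{\boldsymbol\alpha}}\tilde{\boldsymbol c}_{\phi\bar{\boldsymbol\alpha}}^{\mathtt T}/\tilde c_{\bar{\boldsymbol\alpha}\bar{\boldsymbol\alpha}}$ from the off-diagonal entry.

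Your ``by symmetry'' step hides a second inconsistency. The $(2,2)$ entry of the inverse of \eqref{eq:equip_FIM} is $\big(\tilde c_{\phi\phi}-\|\tilde{\boldsymbol c}_{\phi\bar{\boldsymbol\alpha}}\|^2/\tilde c_{\bar{\boldsymbol\alpha}\bar{\boldsymbol\alpha}}-|\tilde c_{\theta\phi}|^2/\tilde c_{\theta\theta}\big)^{-1}$, not $\mathcal G_\phi$. $\mathcal G_\phi$ comes instead from the mirrored reduction that keeps only the $\theta$--$\bar{\boldsymbol\alpha}$ coupling. Hence $\mathcal G_\theta$ is exact when $\tilde{\boldsymbol c}_{\theta\bar{\boldsymbol\alpha}}=\boldsymbol 0$, and $\mathcal G_\phi$ is exact when $\tilde{\boldsymbol c}_{\phi\bar{\boldsymbol\alpha}}=\boldsymbol 0$. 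The stated pair is exact when both couplings vanish, and generically not otherwise.

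Your alignment argument does make both couplings vanish, and exactly rather than approximately. With $\tilde{\boldsymbol w}=\boldsymbol a_{\mathtt t}/\sqrt{N_{\mathtt t}}$ and $\boldsymbol a^{\mathtt H}\dot{\boldsymbol a}_\psi=0$ for centered indices, you get $\mathrm{Tr}(\boldsymbol G\tilde{\boldsymbol w}\tilde{\boldsymbol w}^{\mathtt H}\dot{\boldsymbol G}_\psi^{\mathtt H})=N_{\mathtt t}\,\dot{\boldsymbol a}_{\mathtt r,\psi}^{\mathtt H}\boldsymbol a_{\mathtt r}=0$. But then the nested $\bar{\boldsymbol\alpha}$ terms in $\mathcal G_\theta$ and $\mathcal G_\phi$ are zero, so you have proved only the aligned case of Remark~\ref{remark_tractability_CRLB}.

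For a mispointed $\tilde{\boldsymbol w}$, as in Fig.~\ref{fig:CRLB_vs_error}, both couplings are generically nonzero. Neither your argument nor the paper's then shows that the stated expressions are the exact CRLBs. In that regime they should be presented as the bounds of the reduced FIM \eqref{eq:equip_FIM} and of its mirror, each of which neglects one angle--amplitude coupling.
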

\begin{proof}
    To derive the closed-form CRLBs, we invert the decoupled $2 \times 2$ angular FIM $\boldsymbol{J}_{\theta,\phi}$ in \eqref{eq:equip_FIM}. For a $2 \times 2$ matrix block, the diagonal entries of its inverse $\boldsymbol{J}_{\theta,\phi}^{-1}$ explicitly yield the effective parameter bounds. We use $J_{xy} = \Psi(\boldsymbol{p}, \rho) \rho \tilde{c}_{xy}(\tilde{\boldsymbol{w}})$ to isolate  $\mathcal{G}_\theta(\tilde{\boldsymbol{w}})$ and $\mathcal{G}_\phi(\tilde{\boldsymbol{w}})$. Factoring out $[\Psi(\boldsymbol{p}, \rho)\rho]^{-1}$ yields \eqref{eq:CRLB_theta} and \eqref{eq:CRLB_phi}.
\end{proof}

\begin{figure}[!htb]
	\centering
    \includegraphics[width=0.95\columnwidth,trim={0cm 0.0cm 0cm 0.0cm}]{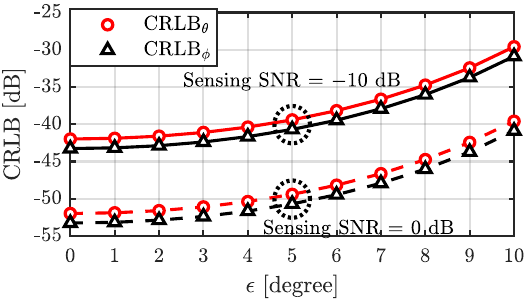}
    \vspace{-2mm}
	\caption{CRLB for $\theta$ and $\phi$ when the sensing beamformer $\boldsymbol{w}$ is steered towards $\tilde{\theta} = \theta + \epsilon$ and $\tilde{\phi} = \phi + \epsilon$, with $N_{\mathtt{t}} = 64$, $N_{\mathtt{r}} = 64$, $K_{\mathtt{MEC}} = 8$, $K_{\mathtt{com}} = 8$, and $L = 256$.}
	\label{fig:CRLB_vs_error}
\end{figure}
As seen in \eqref{eq:CRLB_theta} and \eqref{eq:CRLB_phi}, the CRLBs depend on the sensing beamformer $\boldsymbol{w}$. Fig.~\ref{fig:CRLB_vs_error} shows that the angular CRLBs increase with pointing error $\epsilon \in [0^\circ, 10^\circ]$. For tractable resource allocation, we next consider perfect beam alignment and obtain the simplified expressions of \eqref{eq:equip_FIM} in Remark~\ref{remark_tractability_CRLB}.
\begin{remark}\label{remark_tractability_CRLB}
For the case of perfect beam alignment, \textit{i.e.} $(\tilde{\theta}, \tilde{\phi}) = (\theta, \phi)$, the normalized transmit beamformer evaluates to $\tilde{\boldsymbol{w}} = \frac{1}{\sqrt{N_{\mathtt{t}}}}\boldsymbol{a}_{\mathtt{t}}$. Applying the spatial orthogonality properties of centered planar arrays ($\boldsymbol{a}_{\mathtt{t}}^{\mathtt{H}} \dot{\boldsymbol{a}}_{\mathtt{t},\theta} = 0$ and $\boldsymbol{a}_{\mathtt{t}}^{\mathtt{H}} \dot{\boldsymbol{a}}_{\mathtt{t},\phi} = 0$), the geometric trace constants $\tilde{c}_{(\cdot)}(\tilde{\boldsymbol{w}})$ explicitly simplify to:
\begin{align}
    &\tilde{c}_{\theta\theta} = |\alpha|^2 N_{\mathtt{t}} U_{\mathtt{r},\theta}, \quad \tilde{c}_{\phi\phi} = |\alpha|^2 N_{\mathtt{t}} U_{\mathtt{r},\phi},\label{eq:c_theta_theta_final} \\
    &\Re\{\tilde{c}_{\theta\phi}\} = |\alpha|^2 N_{\mathtt{t}} U_{\mathtt{r},\theta\phi},\label{eq:c_theta_phi_final} \\
    &\boldsymbol{\tilde{c}}_{\bar{\boldsymbol{\alpha}}\bar{\boldsymbol{\alpha}}} = N_{\mathtt{t}} N_{\mathtt{r}} \boldsymbol{I}_2, \quad \tilde{\boldsymbol{c}}_{\theta\bar{\boldsymbol{\alpha}}} = \boldsymbol{0}, \quad \tilde{\boldsymbol{c}}_{\phi\bar{\boldsymbol{\alpha}}} = \boldsymbol{0}.\label{eq:c_alpha_alpha_final}
\end{align}
Since $\tilde{\boldsymbol{c}}_{\theta\bar{\boldsymbol{\alpha}}} = \tilde{\boldsymbol{c}}_{\phi\bar{\boldsymbol{\alpha}}} = \boldsymbol{0}$, angular estimation decouples from $\bar{\boldsymbol{\alpha}}$, yielding the simplified CRLBs in \eqref{eq:CRLB_final_theta}--\eqref{eq:CRLB_final_phi}. For the $N_{\mathtt{rh}} \times N_{\mathtt{rv}}$ Y-Z planar receive array, the spatial energy terms are:
\begin{align}
    &U_{\mathtt{r},\theta} = \frac{N_{\mathtt{r}}}{12} \pi^2 \cos^2(\theta) \sin^2(\phi) (N_{\mathtt{rh}}^2-1), \\
    &U_{\mathtt{r},\phi} = \frac{N_{\mathtt{r}}}{12} \pi^2 \Big( (N_{\mathtt{rh}}^2-1) \sin^2(\theta) \cos^2(\phi) \nonumber\\ 
    &+ (N_{\mathtt{rv}}^2-1) \sin^2(\phi) \Big), \\
    &U_{\mathtt{r},\theta\phi} = \frac{N_{\mathtt{r}}}{12} \pi^2 \sin(\phi) \cos(\phi) \sin(\theta) \cos(\theta) (N_{\mathtt{rh}}^2-1).
\end{align}
In \eqref{eq:CRLB_final_theta}--\eqref{eq:CRLB_final_phi}, $N_{\mathtt{t}}$ scales the gain while $N_{\mathtt{r}}$ governs the angular terms. Furthermore, the CRLBs' proportionality to $I^{\mathtt{up}}(\boldsymbol{p})$ and inverse proportionality to $\rho$ dictate a fundamental trade-off, making joint power allocation essential to preserve sensing performance.
\end{remark}
\begin{proof}
    See Appendix~\ref{proof_CRLB}.
\end{proof}
\begin{figure*}[!b]
\normalsize
\hrulefill 
\begin{align} 
   \text{CRLB}_\theta(\boldsymbol{p}, \rho) &= \frac{6 \big( I^{\mathtt{up}}(\boldsymbol{p}) + \sigma_{\mathtt{rad}}^2(\rho) \big)}{L\rho |\alpha|^2 N_{\mathtt{t}} N_{\mathtt{r}} \pi^2} \left[ \frac{(N_{\mathtt{rh}}^2-1) \sin^2(\theta) \cos^2(\phi) + (N_{\mathtt{rv}}^2-1) \sin^2(\phi)}{\cos^2(\theta) \sin^4(\phi) (N_{\mathtt{rh}}^2-1)(N_{\mathtt{rv}}^2-1)} \right]  \label{eq:CRLB_final_theta}\\  
    \text{CRLB}_\phi(\boldsymbol{p}, \rho) &= \frac{6 \big( I^{\mathtt{up}}(\boldsymbol{p}) + \sigma_{\mathtt{rad}}^2(\rho) \big)}{L\rho |\alpha|^2 N_{\mathtt{t}} N_{\mathtt{r}} \pi^2 \sin^2(\phi) (N_{\mathtt{rv}}^2-1)}. \label{eq:CRLB_final_phi}
\end{align}
\end{figure*}
\begin{remark}\label{remark_Nt_CRLB}
    For fixed powers and target angles away from singular geometries, the angular CRLBs scale as $\mathcal{O}(1/N_{\mathtt{t}})$ with fixed $N_{\mathtt{r}}$. For a square receive UPA and fixed $N_{\mathtt{t}}$, they scale as $\mathcal{O}(1/N_{\mathtt{r}}^2)$.
\begin{proof}
    See Appendix~\ref{Proof_remark_Nt_CRLB}.
\end{proof}
\end{remark}
At finite array sizes, we can see that radar power improves sensing but reduces uplink rates, whereas user power increases sensing interference. These competing effects motivate the joint allocation, detailed next.

\section{ISCC Resource Allocation}\label{sec_Problem_Proposed_solution}
In this section, we formulate the resource allocation problem and propose an effective SCA-based solution for FD-mMIMO ISCC systems.

\subsection{Problem Formulation}\label{subsec_Problem_main} 
\noindent \textit{1) Computational model:} 
The uplink transmission time required for UE $k \in \mathcal{K}_{\mathtt{MEC}}$ to offload $\varphi_k D_k$ bits is given by
\begin{align}\label{t_up_origin}
    t_k^{\mathtt{up}} = \frac{\varphi_k D_k}{R_k^\mathtt{cp}(\boldsymbol{p}, \rho)}.
\end{align}
The edge computing time to execute the task with the allocated processing rate $f_k$ is given by
\begin{align}
    t_k^{\mathtt{es}} = \frac{\varphi_k C_k}{f_k}.
\end{align}
Concurrently, the local processing time at UE $k$, operating at frequency $f_k^{\mathtt{ue}}$, can be expressed as
\begin{align}
    t_k^{\mathtt{loc}} = \frac{(1-\varphi_k)C_k}{f_k^{\mathtt{ue}}}.
\end{align}
Since the local execution and the network-assisted offloading occur in parallel, the e2e latency for user $k$ is given by
\begin{align}\label{eq:total_latency}
    T_{k}^{\mathtt{e2e}}(\varphi_k, \boldsymbol{p}, \rho, \boldsymbol{f}) =  \max \big(t_{k}^{\mathtt{loc}} , t_p + t_{k}^{\mathtt{up}} + t_{k}^{\mathtt{es}}\big) .
\end{align}
Notably, the downlink latency for MEC tasks is negligible because computation results are small and the AP transmits at high power \cite{Son_2026}. Under asynchronous task execution, e2e latency is the maximum of local execution and the combined offloading and edge execution delay.

Let {$E_{k}^{\mathtt{loc}}$, $E_{k}^{\mathtt{up}}$ and $E_k^{\mathtt{pil}}$} denote the energy consumption for local computation, offloading transmission and pilot training of the priority MEC user $k$, respectively. The total energy consumption for user $k$ is given by~\cite{ling2021}
\begin{align}
 &E_{k} = E_{k}^{\mathtt{loc}} + E_{k}^{\mathtt{up}} + E_k^{\mathtt{pil}} \nonumber\\ 
 &= \frac{{{\mu _k}}}{2}(1-{\varphi _{k}}){ C_{k}}{(f_k^\mathtt{ue})^2} + {p_k}\frac{{ {\varphi_{k}} {D_{k}}}}{{R_k^{\mathtt{cp}}(\boldsymbol{p}, \rho)} } + \vartheta_u \tau_p T_s
\end{align}
where $\mu_k/2$ (in $\text{W}\cdot\text{s}^3/\text{cycle}^3$) denotes the computational energy parameter of UE $k$. Note that the communication users $k' \in \mathcal{K}_{\mathtt{com}}$ do not execute any computational tasks. Consequently, their energy consumption arises only from small data transmission and pilot training, subject to their maximum power constraints.

\noindent\textit{2) Resource allocation problem:} We jointly optimize offloading fractions $\boldsymbol\varphi$, user powers $\boldsymbol p$, radar power $\rho$, and edge CPU allocations $\boldsymbol f$ to minimize average e2e latency:
\begin{subequations} \label{eq:Main}
	\begin{IEEEeqnarray}{cl}
		 \mathop {\min}\limits_{\boldsymbol{\varphi}, \boldsymbol{p}, \rho, \boldsymbol{f}} &\quad \frac{1}{K_{\mathtt{MEC}} }\sum_{k \in \mathcal{K}_{\mathtt{MEC}} } T_k^{\mathtt{e2e}}  \label{eq:Maina} \\ 
		\st& T_{k}^{\mathtt{e2e}} \le T_k^{\max},  \forall k \in \mathcal{K}_{\mathtt{MEC}}   \label{eq:Mainb}\\
		& E_{k} \le E_k^{\max },  \forall k \in \mathcal{K}_{\mathtt{MEC}}  \label{eq:Mainc} \\
        & R_{k'}^{\mathtt{cm}}(\boldsymbol{p}, \rho) \ge R^{\min},  \forall k' \in \mathcal{K}_{\mathtt{com}} \label{eq:Maind} \\
        & \text{CRLB}_\psi(\boldsymbol{p} , \rho) \le \Phi_\psi^0, \psi \in \{\theta, \phi\} \label{eq:Maine} \\
        & \sum_{k \in \mathcal{K}_{\mathtt{MEC}}} f_k \le F^{\max}_{es} \label{eq:Mainf} \\
         & 0 \le \rho \le P^{\max}_{\mathtt{AP}}, 0 \le p_k \le P_k^{\max},   \forall k \in \mathcal{K} \label{eq:Maing} \\
         &  0 \le \varphi_k \le 1,   \forall k \in \mathcal{K}_{\mathtt{MEC}}. \label{eq:Mainh}
	\end{IEEEeqnarray}
\end{subequations}
Constraints~\eqref{eq:Mainb}--\eqref{eq:Maine} impose task deadlines, energy budgets, minimum communication rates, and sensing accuracy. Constraints~\eqref{eq:Mainf}--\eqref{eq:Mainh} specify CPU, power, and offloading limits. The coupled latency and energy terms make problem~\eqref{eq:Main} nonconvex.

\subsection{Proposed Solution}\label{subsection_proposed_solution}
 To reformulate the rate-dependent latency terms in \eqref{eq:Main}, we introduce the slack variables
  $\boldsymbol{\varrho} \triangleq \{ \varrho_k\}_{\forall k \in \mathcal{K}}$, satisfying
\begin{align}\label{eq:r_k}
   R_k^{\mathtt{cp}}(\boldsymbol{p}, \rho)  \ge \frac{1}{\varrho_k}, \forall k \in \mathcal{K}_{\mathtt{MEC}}. 
\end{align}
From \eqref{eq:r_k}, the objective function \eqref{eq:Maina} and constraint \eqref{eq:Mainb} can be rewritten as
\begin{IEEEeqnarray}{cl}
    T_k^{\mathtt{e2e}} \le \max\Big(t_k^{\mathtt{loc}},t_p + \varphi_k \varrho_k D_k + \frac{\varphi_k C_k}{f_k}\Big) := T_k^{'\mathtt{e2e}}.\quad
\end{IEEEeqnarray}
Similarly, constraint \eqref{eq:Mainc} can be rewritten as
\begin{align}\label{eq:UE_energy_equip}
     \frac{{{\mu _k}}}{2}(1-{\varphi _{k}}){ C_{k}}{(f_k^\mathtt{ue})^2} + {p_k}\varphi_{k}\varrho_k D_{k} + \vartheta_u \tau_p T_s \le E_k^{\max}.
\end{align}
As a result, problem \eqref{eq:Main} can be reformulated as
\begin{subequations} \label{eq:Main_relax}
	\begin{IEEEeqnarray}{cl}
		 \mathop {\min}\limits_{\boldsymbol{\varphi}, \boldsymbol{p}, \rho, \boldsymbol{f}, \boldsymbol{\varrho}} &\quad \frac{1}{K_{\mathtt{MEC}} }\sum_{k \in \mathcal{K}_{\mathtt{MEC}} } T_k^{'\mathtt{e2e}}  \label{eq:Main_relaxa} \\ 
		\st& T_{k}^{'\mathtt{e2e}} \le T_k^{\max},  \forall k \in \mathcal{K}_{\mathtt{MEC}}   \label{eq:Main_relaxb}\\
		&  \eqref{eq:Maind}, \eqref{eq:Maine}, \eqref{eq:Mainf}, \eqref{eq:Maing}, \eqref{eq:Mainh}, \eqref{eq:r_k},\eqref{eq:UE_energy_equip}.\quad\label{eq:Main_relaxc}. 
	\end{IEEEeqnarray}
\end{subequations}
We apply the SCA principles to the coupled rate, latency, and energy terms and rewrite the sensing constraint in affine form.

\textit{1) Convexifying constraint~\eqref{eq:r_k}:} We first convert $R_k^{\mathtt{cp}}(\boldsymbol{p}, \rho)$ into the form of $\ln{(1+x/y)}$ which we can approximate around the feasible points $x^{(i)} > 0$ and $y^{(i)} >0$ for $x$ and $y$ at the $i$-th iteration as follows~\cite{Nasir_2021}: 
\begin{align}\label{eq:R_K_SCA_Form}
    \ln \left( 1 + \frac{x}{y} \right) \geq & \ln \left( 1 + \frac{x^{(i)}}{y^{(i)}} \right) + 2 \frac{x^{(i)}}{x^{(i)} + y^{(i)}} \nonumber\\ 
& - \frac{(x^{(i)})^2}{x^{(i)} + y^{(i)}} \frac{1}{x} - \frac{x^{(i)}}{(x^{(i)} + y^{(i)})y^{(i)}} y.
\end{align}
Define $x_k=(N_{\mathtt{r}}-K)\xi_k p_k$ and $y=I^{\mathtt{er}}(\boldsymbol p)+I^{\mathtt{se}}(\rho)+\sigma_c^2$, with iteration values $x_k^{(i)}$ and $y^{(i)}$. Applying \eqref{eq:R_K_SCA_Form} gives the concave lower bound
\begin{align}\label{eq:R_K_SCA_final}
R_k^{\mathtt{cp},(i)}(\boldsymbol p,\rho)
=\frac{B}{\ln 2}\Bigg[&\ln\left(1+\frac{x_k^{(i)}}{y^{(i)}}\right)
+\frac{2x_k^{(i)}}{x_k^{(i)}+y^{(i)}}\nonumber\\
&-\frac{(x_k^{(i)})^2/x_k+x_k^{(i)}y/y^{(i)}}{x_k^{(i)}+y^{(i)}}\Bigg].
\end{align}
From \eqref{eq:R_K_SCA_final}, constraint \eqref{eq:r_k} is iteratively replaced by the following convex constraint:
\begin{align}\label{eq:r_k_equi}
   R_k^{\mathtt{cp}, (i)}(\boldsymbol{p}, \rho)  \ge \frac{1}{\varrho_k}, \forall k \in \mathcal{K}_{\mathtt{MEC}}. 
\end{align}
\textit{2) Convexifying constraint \eqref{eq:Main_relaxb}:} In this constraint, $t_k^{\mathtt{up}} = \varphi_k \varrho_k D_k$ and $t_k^{\mathtt{es}} = \varphi_k C_k/ f_k$ are still nonconvex due to the coupling among the variables ($\varphi_k$, $\varrho_k$, $f_k$). To convexify $t_k^{\mathtt{up}}$, we adopt the following inequality:
\begin{align}\label{eq:SCA_xy}
    xy \le \frac{1}{2}\Big( \frac{\overline{x}}{\overline{y}}y^2 + \frac{\overline{y}}{\overline{x}}x^2 \Big)
\end{align}
where $\overline{x}$ and $\overline{y}$ are feasible points of $x$ and $y$, respectively. Letting {$x= \varphi_k >0$}, $y=\varrho_k >0$, $\overline{x} = \varphi_k^{(i)} > 0 $, $\overline{y} = \varrho_k^{(i)} >0$, $t_k^{\mathtt{up}}$ is iteratively replaced by 
\begin{align}
    t_k^{\mathtt{up}} \le  \frac{D_k}{2} \Big(\frac{\varphi_k^{(i)}}{\varrho_k^{(i)}}\varrho_k^2 + \frac{\varrho_k^{(i)}}{ \varphi_k^{(i)}}\varphi_k^2 \Big) := t_k^{\mathtt{up}, (i)}.
\end{align}
Next, we convexify $t_k^{\mathtt{es}}= \varphi_k C_k / f_k$ by adopting the following inequality:
\begin{align}\label{eq:convexify_x_divide_y}
    \frac{x}{y} \le \frac{1}{2}\Big(\frac{x^2}{\overline{x} \,\overline{y} } + \frac{\overline{x} \,\overline{y}}{2y\overline{y} - \overline{y}^2} \Big)
\end{align}
where $\overline{x}>0$ and $\overline{y} >0$ denote feasible points at the previous iteration, and $y>\overline{y}/2$. By substituting $x=\varphi_k >0$, $\overline{x} = \varphi_k^{(i)} >0$, $y = f_k > 0$ and $\overline{y} = f_k^{(i)}>0$ in \eqref{eq:convexify_x_divide_y}, a convex upper bound on $t_k^{\mathtt{es}}$ is given by
\begin{align}
    t_k^{\mathtt{es}} \le& \frac{C_k}{2}\Big( \frac{\varphi_k^2}{\varphi_k^{(i)}f_k^{(i)}} + \frac{\varphi_k^{(i)}f_k^{(i)}}{2f_kf_k^{(i)} - (f_k^{(i)})^2}\Big):= t_k^{\mathtt{es},(i)}.
\end{align}
As a result, we have
\begin{align}
    T_k^{'\mathtt{e2e}} \le \max(t_k^{\mathtt{loc}}, t_p + t_k^{\mathtt{up},(i)} + t_k^{\mathtt{es},(i)}) := T_k^{'\mathtt{e2e},(i)}.
\end{align}
The function $T_k^{'\mathtt{e2e},(i)}$ is convex, giving the following convex approximation of constraint \eqref{eq:Main_relaxb}:
\begin{align}\label{eq:Te2e_SCA}
    T_k^{'\mathtt{e2e},(i)} \le T_k^{\max}, \forall k \in \mathcal{K}_{\mathtt{MEC}}.
\end{align}
\textit{3) Energy constraint \eqref{eq:UE_energy_equip}:} Constraint \eqref{eq:UE_energy_equip} is nonconvex due to the strong coupling between $p_k$, $\varphi_k$ and $\varrho_k$ variables. To convexify it, we adopt the following inequality:
\begin{align}
     xyz \le \frac{\overline{x}\, \overline{y} \, \overline{z}}{3}\Big( \frac{x^3}{\overline{x}^3} + \frac{y^3}{\overline{y}^3} + \frac{z^3}{\overline{z}^3}  \Big)
\end{align}
where $\overline{x} >0$, $\overline{y} >0$ and $\overline{z} >0$ are the feasible points of $x, y$ and $z$, respectively. Letting $x = p_k >0$, $\overline{x} = p_k^{(i)} >0$, $y = \varphi_k > 0$, $\overline{y} = \varphi_k^{(i)} >0$, $z = \varrho_k >0$ and $\overline{z} = \varrho_k^{(i)}$, constraint \eqref{eq:UE_energy_equip} is iteratively replaced by
\begin{align}\label{eq:SCA_energy_Const}
    &\frac{{{\mu _k}}}{2}(1-{\varphi _{k}}){ C_{k}}{(f_k^\mathtt{ue})^2} + \frac{D_kp_k^{(i)} \varphi_k^{(i)} \varrho_k^{(i)} }{3} \Big( \frac{p_k^3}{(p_k^{(i)})^3}  \nonumber\\ 
    & +\frac{\varphi_k^3}{(\varphi_k^{(i)})^3} + \frac{\varrho_k^3}{(\varrho_k^{(i)})^3} \Big) + \vartheta_u \tau_p T_s \le {E}_k^{\max}.
\end{align}
\textit{4) Communication-rate constraint \eqref{eq:Maind}:} Similarly, we can apply \eqref{eq:R_K_SCA_final} to convexify constraint \eqref{eq:Maind} by replacing $p_k, \forall k\in \mathcal{K}_{\mathtt{MEC}}$ with $p_{k'}, \forall k' \in \mathcal{K}_{\mathtt{com}}$. Thus, we can obtain
\begin{align}
    R_{k'}^{\mathtt{cm}}(\boldsymbol{p}, \rho) \ge R_{k'}^{\mathtt{cm}{(i)}}(\boldsymbol{p}, \rho).
\end{align}
As a result, constraint \eqref{eq:Maind} is iteratively replaced by
\begin{align}\label{eq:SCA_R_cm}
    R_{k'}^{\mathtt{cm}{(i)}}(\boldsymbol{p}, \rho) \ge R^{\min}, \forall k' \in \mathcal{K}_{\mathtt{com}}.
\end{align}
\textit{5) Sensing constraint:} For $\rho>0$, substituting \eqref{eq:CRLB_final_theta}--\eqref{eq:CRLB_final_phi} into \eqref{eq:Maine} gives the affine constraints
\begin{align}\label{eq:SCA_CRLB}
   & \frac{\Delta_{\psi}}{\Phi_{\psi}^0} \Big (\sum_{k \in \mathcal{K}_{\mathtt{MEC}}} p_k \beta_k + \sum_{k' \in \mathcal{K}_{\mathtt{com}}} p_{k'} \beta_{k'} \Big) \nonumber\\ 
    &+ \rho \Big(\frac{\Delta_{\psi}}{\Phi_{\psi}^0} \chi_{\mathtt{SI}} -1\Big) + \frac{\Delta_{\psi}}{\Phi_{\psi}^0}\sigma_c^2 \le 0, \psi \in \{\theta, \phi\}
\end{align}
where 
\begin{align}
    &\Delta_{\theta} = \frac{6 }{L |\alpha|^2 N_{\mathtt{t}} N_{\mathtt{r}} \pi^2} \\ \nonumber 
    &\,\times \left[ \frac{(N_{\mathtt{rh}}^2-1) \sin^2(\theta) \cos^2(\phi) + (N_{\mathtt{rv}}^2-1) \sin^2(\phi)}{\cos^2(\theta) \sin^4(\phi) (N_{\mathtt{rh}}^2-1)(N_{\mathtt{rv}}^2-1)} \right]\\ 
     &\Delta_{\phi} = \frac{6 }{L |\alpha|^2 N_{\mathtt{t}} N_{\mathtt{r}} \pi^2 \sin^2(\phi) (N_{\mathtt{rv}}^2-1)}.
\end{align}
Based on the above results, the convex program of \eqref{eq:Main} solved at iteration $i$ is given as
\begin{subequations} \label{eq:Main_convex}
	\begin{IEEEeqnarray}{cl}
		 \mathop {\min}\limits_{\boldsymbol{\varphi}, \boldsymbol{p}, \rho, \boldsymbol{f}, \boldsymbol{\varrho}}& \quad \frac{1}{K_{\mathtt{MEC}} }\sum_{k \in \mathcal{K}_{\mathtt{MEC}} } T_k^{'\mathtt{e2e},(i)}  \label{eq:Main_convexa} \\ 
		\st& \eqref{eq:Mainf}, \eqref{eq:Maing}, \eqref{eq:Mainh}, \eqref{eq:r_k_equi}, \eqref{eq:Te2e_SCA},\eqref{eq:SCA_energy_Const}, \eqref{eq:SCA_R_cm}, \eqref{eq:SCA_CRLB}   \label{eq:Main_convexb}.\quad\
	\end{IEEEeqnarray}
\end{subequations}
\begin{algorithm}[t]
\small
\begin{algorithmic}[1]
\protect\caption{Proposed SCA-based Iterative
Algorithm for Solving Problem \eqref{eq:Main}} \label{alg_1} \global\long\def\algorithmicrequire{\textbf{Initialization:}}	
      \REQUIRE Set $i: = 0$ and randomly generate feasible points $({\boldsymbol{\varphi }}^{(0)},\boldsymbol{p}^{(0)},\rho^{(0)},\boldsymbol{f}^{(0)},\boldsymbol{\varrho}^{(0)})$.
	   \REPEAT
	    \STATE Solve problem \eqref{eq:Main_convex} to obtain the  optimal solutions $({\boldsymbol{\varphi }}^{*},{\boldsymbol{p}}^{*}, \rho^{*}, \boldsymbol{f}^{*},\boldsymbol{\varrho}^{*})$;
	    \STATE Set $(\boldsymbol\varphi^{(i+1)},\boldsymbol p^{(i+1)},\rho^{(i+1)},\boldsymbol f^{(i+1)},\boldsymbol\varrho^{(i+1)})$ to the solution of \eqref{eq:Main_convex}.
	    
	    \STATE Set $i: = i + 1$;
	    
	    \UNTIL Convergence 
           \STATE \textbf{Output} $({\boldsymbol{\varphi }}^{*},{\boldsymbol{p}}^{*}, \rho^{*}, \boldsymbol{f}^{*},\boldsymbol{\varrho}^{*})$.
\end{algorithmic}
\end{algorithm}
\subsection{Algorithm and Complexity Analysis}\label{subsec_complexity_analysis}
The SCA-based solution for solving problem \eqref{eq:Main} is summarized in Algorithm~\ref{alg_1}. At each iteration $i$, the optimal solution of the convex program \eqref{eq:Main_convex} is used as the feasible point for the next iteration, and this process is repeated until convergence.
Note that the cubic terms in constraint \eqref{eq:SCA_energy_Const} can be exactly represented by a constant number of auxiliary variables and second-order cone (SOC) constraints. Thus, problem \eqref{eq:Main_convex} only includes {$4K_{\mathtt{MEC}} + K_{\mathtt{com}} +1$} scalar variables and {$4K_{\mathtt{MEC}} + 2K_{\mathtt{com}} +4$} linear, quadratic and SOC constraints. Thus, the computational complexity per iteration of Alg. \ref{alg_1} is {$\mathcal{O}\big( \sqrt{4K_{\mathtt{MEC}} + 2K_{\mathtt{com}} +4} (4K_{\mathtt{MEC}} + K_{\mathtt{com}} +1)^3 \big)$}.

\section{Numerical Results}\label{sec_NumericalResults}
 \subsection{Simulation Setting}\label{sec_simulation_setup}
 We consider a FD-mMIMO ISCC network in which the UEs are uniformly distributed within a cell of radius $1000$ meters (m), with the AP located at the center. No UE is located closer to the AP than $d_h = 100$ m. The target is located $400$ m from the AP at angles $(\theta, \phi) = (\frac{\pi}{8}, \frac{\pi}{4})$, while both internal leakage angles $(\theta_{\mathtt{SI}}, \phi_{\mathtt{SI}})$ are set to $\frac{\pi}{2}$. The large-scale fading coefficients are modeled as $\beta_k = z_k/ (d_k/d_h)^{\nu}$, where $z_k$ is a log-normal random variable with standard deviation $\sigma_{\mathtt{shadow}}=7$ dB, $d_k$ is the UE-AP distance, and $\nu = 3.2$ is the path loss exponent~\cite{Ngo2013}.  We assume $\alpha = \frac{1}{\beta_s \sqrt{2}} (1+j)$, where $\beta_s$ denotes the round-trip path loss between the AP and the target. We set $L = 30$ and $\text{SNR} = P_{\mathtt{AP}}^{\max}$ since the noise variances are normalized as $\sigma^2_c = 1$ \cite{Ngo2013}.  The RSI power factor is set to $\chi_{\mathtt{SI}} = -60$ dB (\textit{i.e.} $10^{-6}$) \cite{Bharadia_2013}. The Rician $\kappa$-factor is set to $\kappa_{\mathtt{SI}} = 5$ dB \cite{JSAC_FD18}. We set the symbol period $T_s = 1/B$, the pilot sequence length $\tau_p = K $ and the pilot transmit power $\vartheta_u = P_k^{\max}$. 

 The array response vectors \eqref{eq:resonponse_vector_h} and \eqref{eq:response_vector_v} are set to $N_{\mathtt{th}} = N_{\mathtt{tv}}$ $= \sqrt{N_{\mathtt{t}}} \in \mathbb{N}$ and $N_{\mathtt{rh}} = N_{\mathtt{rv}}$ $= \sqrt{N_{\mathtt{r}}} \in \mathbb{N}$. Other parameters are listed in Table~\ref{tab:Simulationparameter1}.
  \begin{table}[t]
\centering  
\captionof{table}{Simulation Parameters}
\label{tab:Simulationparameter1}
\scalebox{0.77}{
    \begin{tabular}{l|l}
        \hline
        Parameter & Value \\
        \hline\hline
        System bandwidth, $B$ & $10$ MHz \cite{Huynh_ISSC_2024} \\
        Number of UEs, $K_{\mathtt{MEC}}$ $\&$ $K_{\mathtt{com}}$ & $8$ $\&$ $8$ \cite{huynh_service_placement}\\
        UE's processing rate, $f_k^{\mathtt{ue}}, \forall k$ & $1$ GHz \cite{huynh_service_placement}\\
        UE's maximum transmit power, $P_k^{\max}$ & $23$ dBm \cite{huynh_service_placement}\\
        Size of task, $D_k, \forall k \in \mathcal{K}_{\mathtt{MEC}}$ & $1354$ bytes \cite{3GPP} \\
        Complexity of task, $C_k/D_k, \forall k \in \mathcal{K}_{\mathtt{MEC}}$ & $[200, 500]$ $\frac{\text{cycles}}{bytes}$ \\
        Task deadline, $T_k^{\max}$ & $3$ ms \cite{Bennis_2018} \\
        Minimum communication rate requirement, $R^{\min}$ & $1$ Mbps~\cite{Son_2026} \\
        ES's maximum computing capacity, $F_{\mathtt{es}}^{\max}$ & $20$ GHz \\
        UE's maximum energy consumption, $E_k^{\max }$ & $1$ J \\
        Effective capacitance coefficient, ${\mu _k}$ & ${10^{ - 27}}\,\frac{\mathrm{W}\cdot\mathrm{s}^3}{\mathrm{cycle}^3}$\\
        \hline		  				
    \end{tabular}
}
\end{table}

\subsection{Benchmark Schemes}\label{sec_benchmark_schemes}
 To evaluate the proposed optimization framework, we employ CVX to solve the convex program in \eqref{eq:Main_convex}. For comparison, we consider five benchmark approaches:
\begin{itemize}
    \item \textbf{Brute-Force search (BFS):} This benchmark searches candidate resource allocations at high computational cost. We note that its accuracy depends on the search resolution. 
    \item \textbf{Half-duplex ISCC (HD-ISCC):} We consider an orthogonal time-division multiplexing protocol. In Phase 2, to completely prevent SI, the temporal resource is equally partitioned ($50\%$-$50\%$) into two sequential sub-phases for uplink communications and radar sensing, respectively \cite{Dan_nguyen_2014,JSAC_FD18}. Consequently, the effective achievable rate of the users is strictly scaled by a factor of 0.5, given as $R_{k,\mathtt{HD}}^{\mathtt{cp}} = 0.5 B \log_2(1 + \frac{p_k (N_{\mathtt{r}} - K) \xi_k}{I^{\mathtt{er}}(\boldsymbol{p}) + \sigma_c^2}), \forall k \in \mathcal{K}_{\mathtt{MEC}}$ and  $R_{k',\mathtt{HD}}^{\mathtt{cm}} = 0.5 B \log_2(1 + \frac{p_{k'} (N_{\mathtt{r}} - K) \xi_{k'}}{I^{\mathtt{er}}(\boldsymbol{p}) + \sigma_c^2}), \forall k' \in \mathcal{K}_{\mathtt{com}}$. 
    Symmetrically, the sensing operation is entirely free from the multi-user uplink interference. Thus, the interference-plus-noise covariance matrix is simplified to $\boldsymbol{R}_{\mathtt{in}} = (\chi_{\mathtt{SI}} \rho + \sigma_c^2) \boldsymbol{I}_{N_{\mathtt{r}} L}$. Consequently, the total e2e latency is calculated as $T_{k,\mathtt{HD}}^{\mathtt{e2e}} = \max \big(t_{k}^{\mathtt{loc}} , t_p + t_{k,\mathtt{HD}}^{\mathtt{up}} + t_{k}^{\mathtt{es}}\big)$.
    \item \textbf{Equal power allocation among communication UEs (EqCM):} We evaluate two cases, with all communication UEs transmitting at either $P_k^{\max} /2$ or $P_k^{\max}$ to highlight the effectiveness of joint power optimization.
    \item \textbf{Equal power allocation among MEC UEs (EqMEC):} It fixes the transmit power of all MEC UEs to $P_k^{\max} /2$ and $P_k^{\max}$ to evaluate the impact of optimizing MEC power.
    \item  \textbf{Fixed offloading fractions $(20\%, 50\%, 70\%)$:} These schemes fix the fractions of tasks offloaded by MEC UEs to the ES to study the effect of offloading decisions in the FD-mMIMO ISCC system.
    
\end{itemize}

\subsection{Numerical Results and Discussion}\label{sec_numerical_results_dicussions}
\begin{figure}[!htb]
    \centering
    \vspace{-2mm}
    \subfigure[Constant transmit power regime ($p_k = 1, \forall k \in \mathcal{K} $).]{
        \includegraphics[width = 3in]{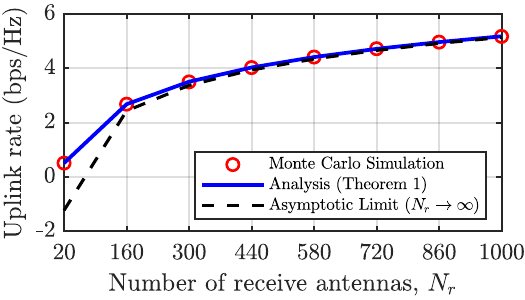}
        \label{fig:uplink_rate_fix_power}
    }
    \subfigure[Power scaling regime ($p_k = P_k^{\mathtt{ref}}/N_{\mathtt{r}}$), showing MUI elimination and convergence to the radar interference ceiling.]{
        \includegraphics[width =3in]{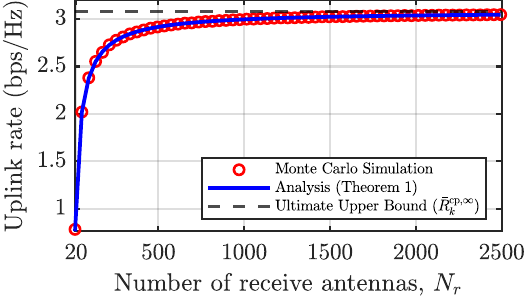}
        \label{fig:uplink_rate_scaling_law_power}
    }
      \vspace{-2mm}
    \caption{\small Uplink offloading rate of the UE $k \in \mathcal{K}_{\mathtt{MEC}}$ versus $N_{\mathtt{r}}$ with $K_{\mathtt{MEC}} = 4$, $K_{\mathtt{com}} = 4$, $N_{\mathtt{th}} = 8$, $N_{\mathtt{tv}} = 8$. }
    \label{fig:verify_theorem1}
\end{figure}
\textit{1) Validation of asymptotic performance and power scaling:} We validate the analytical uplink rates \eqref{eq:up_link_rate} against Monte Carlo (MC) simulations \eqref{eq:Rate_MC} over $5 \times 10^3$ channel realizations. As shown in Fig.~\ref{fig:verify_theorem1}, the analytical lower bounds tightly match the MC results across all antenna configurations, validating our derivations. Fig.~\ref{fig:uplink_rate_fix_power} shows that the achievable rate increases with $N_{\mathtt{r}}$ due to the massive array gain, closely tracking the derived asymptotic limit under constant transmit powers ($p_k = 1, \forall k \in \mathcal{K}$). Conversely, under the power scaling law ($P_k^{\mathtt{ref}} = 150$) \cite{Ngo2013}, Fig.~\ref{fig:uplink_rate_scaling_law_power} shows that both the analytical and MC rates converge to the finite saturation ceiling ($\bar{R}_k^{\mathtt{cp}, \infty}$) defined in \eqref{eq:ultimate_upper_bound} near $N_{\mathtt{r}} = 1000$. Although scaling $N_{\mathtt{r}}$ conserves user energy, the uplink rate ultimately saturates. Under this power-scaling regime, radar interference remains as $N_{\mathtt{r}}$ increases, motivating joint optimization of radar and user powers.

\begin{figure}[!htb]
    \centering
    \vspace{-2mm}
    \subfigure[Interference mitigation via $N_{\mathtt{t}}$ with $N_{\mathtt{r}} = 16$]{
        \includegraphics[width = 3in]{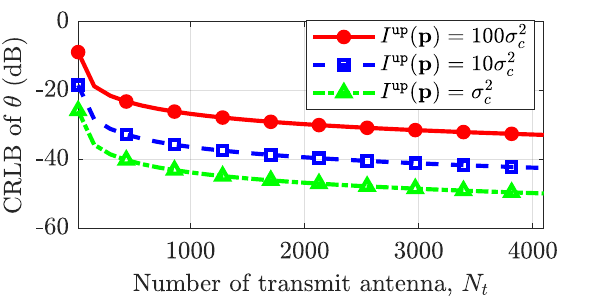}
        \label{fig:CRLB_vs_Nt}
    }
    \subfigure[Error decay rate comparison with varying $N_{\mathtt{t}}$ or $N_{\mathtt{r}}$]{
        \includegraphics[width =3in]{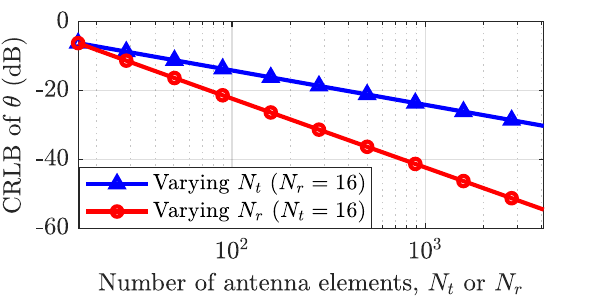}
        \label{fig:decay_rate_vs_Nt_Nr}
    }
      \vspace{-2mm}
    \caption{\small Asymptotic sensing limits ($L = 30$, $\sigma_c^2 = 1$, $\chi_{\mathtt{SI}} = 10^{-6}$). }
    \label{fig:verify_sensing_limits}
\end{figure}
Fig.~\ref{fig:verify_sensing_limits} validates the asymptotic limits from Remark~\ref{remark_Nt_CRLB}. In Fig.~\ref{fig:CRLB_vs_Nt}, the CRLB vanishes as $N_{\mathtt{t}} \to \infty$, consistent with the transmit-array gain in the sensing bounds. Furthermore, Fig.~\ref{fig:decay_rate_vs_Nt_Nr} confirms that scaling $N_{\mathtt{r}}$ yields a much steeper error decay ($-20$ dB/decade) than $N_{\mathtt{t}}$ ($-10$ dB/decade). We observe that under the stated square-array assumptions, increasing the receive-array size reduces the angular CRLBs faster than increasing the transmit-array size.

\begin{figure}[!htb]
	\centering
    \includegraphics[width=0.9\columnwidth,trim={0cm 0.0cm 0cm 0.0cm}]{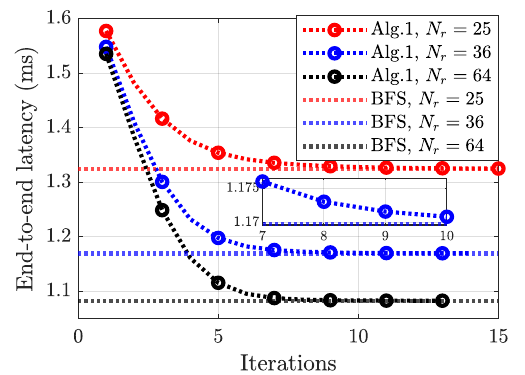}
    \vspace{-2mm}
	\caption{\small Convergence of Algorithm \ref{alg_1} with $\Phi_{\theta}^0 = \Phi_{\phi}^0  = -30$ dB, $N_{\mathtt{t}} = 225$, $K_{\mathtt{MEC}} = 8$, $K_{\mathtt{com}} = 8$, $P_{\mathtt{AP}}^{\max} = 43 $ dBm and $L = 30$.}
	\label{fig:Convergence}
\end{figure} 
\textit{2) Convergence analysis of Algorithm \ref{alg_1}:} Fig.~\ref{fig:Convergence} illustrates the convergence of Algorithm~\ref{alg_1} and the BFS algorithm when solving problem~\eqref{eq:Main} for $N_{\mathtt{r}} \in \{25, 36, 64\}$. As observed, Algorithm~\ref{alg_1} converges within approximately seven iterations across all $N_{\mathtt{r}}$ cases, yielding a marginal objective gap (around $4 \times 10^{-3}$) compared to BFS. Additionally, the average e2e latency of MEC UEs decreases as $N_{\mathtt{r}}$ increases from $25$ to $64$. 

\begin{figure}[!htb]
	\centering
    \includegraphics[width=0.9\columnwidth,trim={0cm 0.0cm 0cm 0.0cm}]{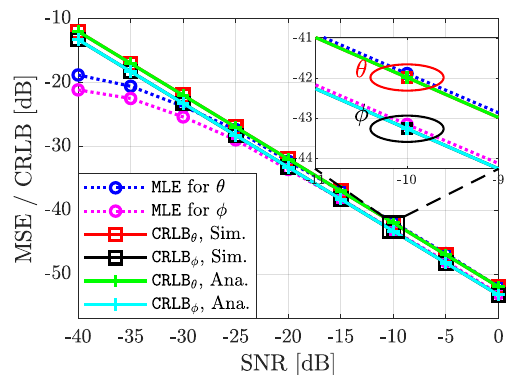}
    \vspace{-2mm}
	\caption{\small CRLB  and MSE of the MLE for $\theta$ and $\phi$ with $N_{\mathtt{t}} = 64$, $N_{\mathtt{r}} = 64$, $K_{\mathtt{MEC}} = 8$, $K_{\mathtt{com}} = 8$, and $L = 30$. The MLE grid-search step size is $\pi/256$.}
	\label{fig:CRLB_vs_MLE}
\end{figure}
\textit{3) Offloading and sensing performance:} We first evaluate the theoretical CRLB derived in Theorem~\ref{theorem_CRLB} and Remark~\ref{remark_tractability_CRLB}. Fig.~\ref{fig:CRLB_vs_MLE} compares the analytical CRLB for $\theta$ and $\phi$ with MC simulations based on \eqref{eq:CRLB_theta} and \eqref{eq:CRLB_phi}. To further validate these theoretical bounds, we compute the MSE of the MLE in \eqref{eq:MLE_w} using an exhaustive search over a fine grid. Following \cite{Liu2021CRB}, the effective sensing SNR in the tracking mode is defined as $\frac{p_{\mathtt{t}} L |\alpha|^2}{ I^{\mathtt{up}}(\boldsymbol{p}) + \chi_{\mathtt{SI}} \rho + \sigma_c^2}$, where $p_{\mathtt{t}} = \rho$. As observed in Fig.~\ref{fig:CRLB_vs_MLE}, the analytical and MC CRLBs provide a tight lower bound for the MLE's MSE. This comparison evaluates the bounds under the adopted sensing model.

\begin{figure}[t]
    \centering
    \vspace{-2mm}
    \subfigure[Average e2e latency and offloading fraction versus CRLB threshold.]{
        \includegraphics[width = 3.0in]{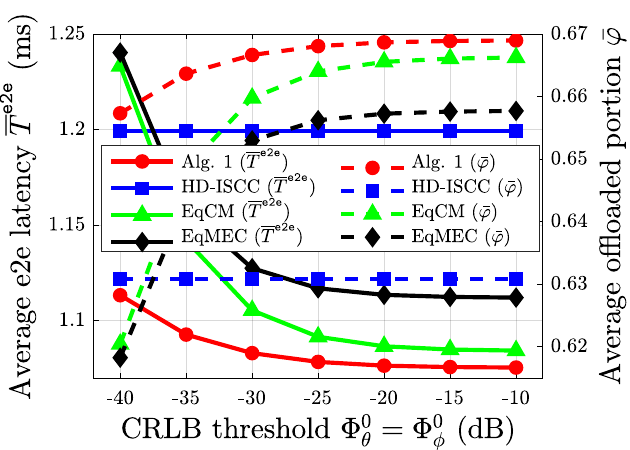}
        \label{fig:latency_offloading_vs_CRLB}
    }
    \subfigure[Average offloading fraction and AP transmit power versus CRLB threshold.]{
        \includegraphics[width =3.0in]{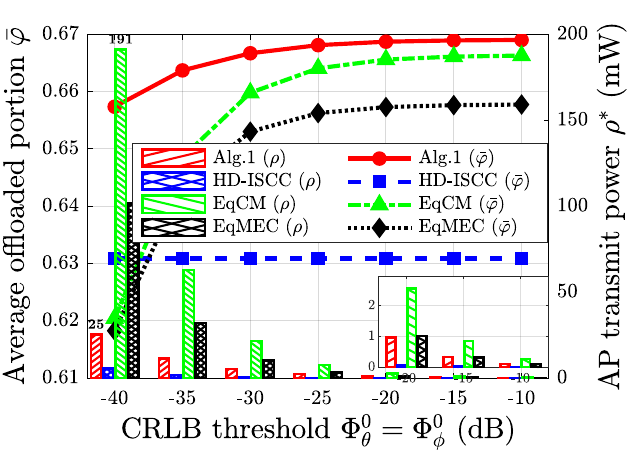}
        \label{fig:offloading_APpower_vs_CRLB}
    }
      \vspace{-2mm}
    \caption{\small Effect of CRLB threshold on computing performance and AP power with  $N_{\mathtt{t}} = 225$, $K_{\mathtt{MEC}} = 8$, $K_{\mathtt{com}} = 8$ and $L = 30$.}
    \label{fig:effect_CRLB_threshold}
\end{figure}
Fig.~\ref{fig:effect_CRLB_threshold} shows how the sensing threshold affects latency, offloading, and radar power. Relaxing $\Phi_\psi^0$ from $-40$ to $-10$ dB reduces the required radar power and its interference to uplink reception, allowing more offloading and lower latency. Algorithm~\ref{alg_1} achieves the lowest latency among the compared schemes. HD-ISCC requires less radar power because sensing is free from uplink interference, but its reduced communication time limits offloading. At $-40$ dB, the proposed allocation reduces radar power by approximately $86\%$ relative to EqCM. It is clear that fixing MEC or communication-user powers limits the ability to balance sensing and offloading requirements.

\begin{figure}[t]
    \centering
    \vspace{-2mm}
    \subfigure[\small $\rho^*$ and $\overline{T}^{\mathtt{e2e}}$ versus $N_{\mathtt{r}}$ with different $\Phi_{\psi}^0$ of Algorithm \ref{alg_1}]{
        \includegraphics[width = 3in]{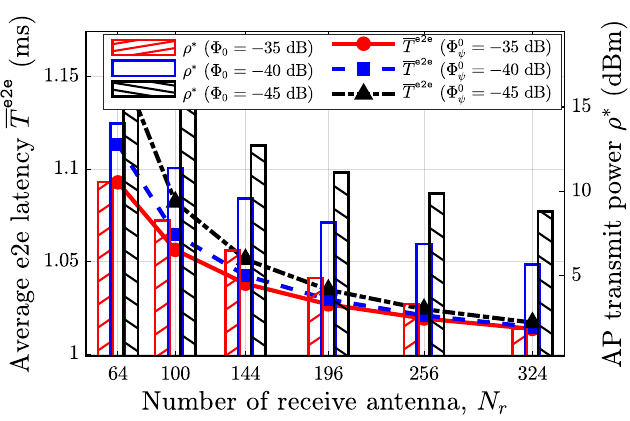}
        \label{fig:APpower_latency_vs_Nr}
    }
    \subfigure[\small Average UE energy consumption and e2e latency versus $N_{\mathtt{r}}$ with EqCM$(P_k^{\max})$ and EqMEC$(P_k^{\max})$.] {
        \includegraphics[width =3in]{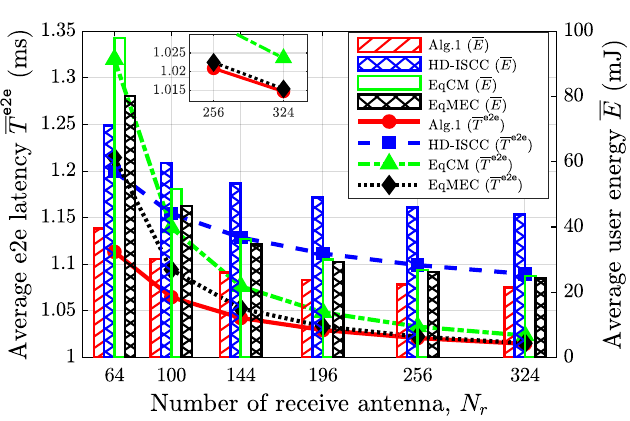}
        \label{fig:Energy_latency_vs_Nr}
    }
      \vspace{-2mm}
    \caption{\small Impact of $N_{\mathtt{r}}$ on AP power, user e2e latency, and energy ($N_{\mathtt{t}} = 225$, $K_{\mathtt{MEC}} = K_{\mathtt{com}} = 8$, $L = 30$, $\Phi_{\psi}^0 = -40$ dB).}
    \label{fig:effect_Nr}
\end{figure}

Fig.~\ref{fig:effect_Nr} shows that larger receive arrays reduce latency, radar power, and UE energy consumption. In Fig.~\ref{fig:APpower_latency_vs_Nr}, improved sensing resolution reduces the radar power needed to meet each CRLB threshold, while higher uplink rates accelerate offloading. The latency differences between sensing thresholds decrease with $N_{\mathtt{r}}$. These finite-array results complement the fixed-radar-power scaling analysis in Remark~\ref{remark_dof_limits}, rather than representing the same asymptotic regime. Fig.~\ref{fig:Energy_latency_vs_Nr} shows that joint allocation retains an advantage over fixed-power schemes, although the latency gap narrows as the array grows.
\begin{figure}[!htb]
	\centering
    \includegraphics[width=0.9\columnwidth,trim={0cm 0.0cm 0cm 0.0cm}]{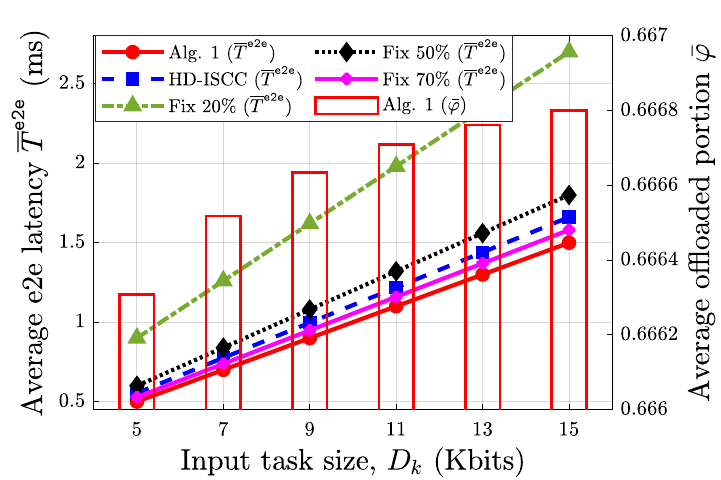}
    \vspace{-2mm}
	\caption{\small Effect of task size $D_k$ on average e2e latency with $\Phi_{\theta}^0 = \Phi_{\phi}^0  = -30$ dB, $N_{\mathtt{t}} = 225$, $N_{\mathtt{r}} = 64$, $K_{\mathtt{MEC}} = 8$, $K_{\mathtt{com}} = 8$, $P_{\mathtt{AP}}^{\max} = 43 $ dBm and $L = 30$.}
	\label{fig:latency_vs_DK}
\end{figure}

Fig.~\ref{fig:latency_vs_DK} investigates the impact of the input task size $D_k$ on the ISCC system across fixed offloading fractions ($20\%$, $50\%$, and $70\%$). As observed, the latency of all schemes increases as $D_k$ grows from $5$ to $15$ kbits, though higher offloading fractions generally help reduce this delay. Algorithm \ref{alg_1} achieves the lowest latency with an average offloading fraction of $66.68\%$, below the fixed $70\%$ benchmark. Thus, offloading a larger task fraction does not necessarily reduce latency.

\begin{figure}[t]
    \centering
    \vspace{-2mm}
    \subfigure[\small Effect of the total number of UEs $K$  on average e2e latency]{
        \includegraphics[width = 3in]{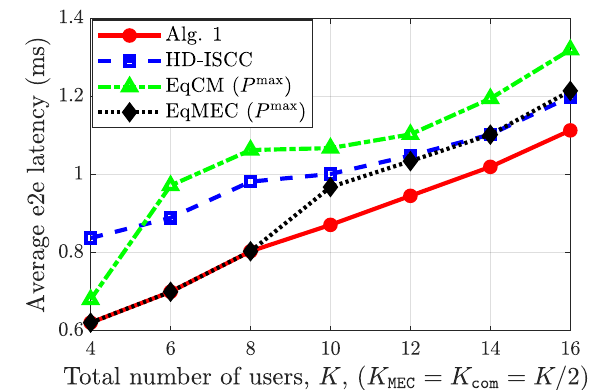}
        \label{fig:latency_vs_num_UE}
    }
    \subfigure[\small Cumulative probability of latency with $K_{\mathtt{MEC}} = 8, K_{\mathtt{com}} = 8$] {
        \includegraphics[width =3in]{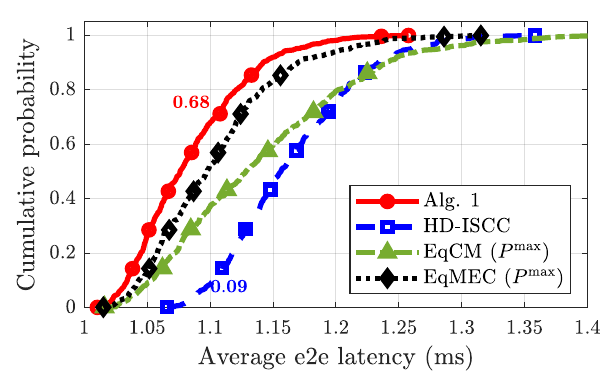}
        \label{fig:latnecy_CDF}
    }
      \vspace{-2mm}
    \caption{\small Computing performance with $\Phi_{\theta}^0 = \Phi_{\phi}^0  = -40$ dB, $N_{\mathtt{t}} = 225$, $N_{\mathtt{r}} = 64$, $P_{\mathtt{AP}}^{\max} = 43 $ dBm and $L = 30$. }
    \label{fig:effect_number_UE_CDF}
\end{figure}

Fig.~\ref{fig:effect_number_UE_CDF} shows that latency increases with the number of users. The proposed allocation achieves lower latency than the benchmarks, with a larger advantage over EqMEC as user loading increases. In Fig.~\ref{fig:latnecy_CDF}, the probability of latency below $1.1$ ms is approximately $68\%$ for Algorithm~\ref{alg_1} and $9\%$ for HD-ISCC. These results illustrate the benefit of adapting transmission and computing resources to user loading. Increasing $K$ reduces the available ZF degrees of freedom $N_{\mathtt{r}}-K$ and increases competition for edge CPU resources. With imperfect CSI, fixed user powers also increase residual interference. These effects explain why the gap between joint allocation and fixed-power schemes becomes more visible at higher user loads. 

\section{Conclusion}\label{sec_conclusion}
We investigated sensing-aware latency minimization in FD massive MIMO ISCC networks. In particular, the uplink rate and angular CRLB analysis characterized the coupling between radar transmission, computation offloading, and sensing accuracy. The scaling analysis further distinguished the sensing gains of transmit and receive arrays and identified a finite uplink rate limit when user powers decrease inversely with receive-array size. This result motivates coordinating radar power with user-power reduction as arrays grow. Based on these metrics, we developed an SCA-based algorithm that jointly allocates offloading fractions, user and radar powers, and edge CPU resources. Numerical results showed lower latency than the considered fixed-allocation and HD benchmarks, together with an approximately $86\%$ radar-power reduction relative to EqCM at a $-40$ dB CRLB threshold. The results support joint allocation of radar power, user powers, and edge CPU resources to reduce latency under sensing constraints. The sensing analysis relies on the adopted interference approximation and beam-alignment assumptions. Future work will extend the framework to spatially colored interference and uncertain target directions.

\appendices
\renewcommand{\thesectiondis}[2]{\Alph{section}:}
\section{Proof of Theorem~\ref{theorem_uplink_rate}}\label{proof_uplink_rate}

We apply the UatF bound to evaluate the independent interference expectations in Eq. \eqref{eq:Rate_MC}.

    \noindent1) \textit{Desired signal power}: Since  $\mathbb{E}[|s_k|^2] = 1$, we obtain $\mathbb{E} \left[ |\sqrt{p_k} s_k|^2 \right] = p_k\mathbb{E} \left[ | s_k|^2 \right] = p_k$.

\noindent 2) \textit{Inter-user interference power}: Since the error vectors $\boldsymbol{e}_i \sim \mathcal{CN}(\boldsymbol{0}, \epsilon_i\boldsymbol{I}_{N_{\mathtt{r}}})$ ($i \in \{j, k'\}$) are independent of $\boldsymbol{u}_k$, we have $\mathbb{E}[|\boldsymbol{u}_k^{\mathtt{H}} \boldsymbol{e}_i|^2] = \epsilon_i \mathbb{E}[\|\boldsymbol{u}_k\|^2]$. Summing over all interferers yields $\mathbb{E}[I_k^{\mathtt{ce}}] = \big( \sum_{j \in \mathcal{K}_{\mathtt{MEC}}} p_j \epsilon_j + \sum_{k' \in \mathcal{K}_{\mathtt{com}}} p_{k'} \epsilon_{k'} \big) \mathbb{E}[\|\boldsymbol{u}_k\|^2] \triangleq I^{\mathtt{er}}(\boldsymbol{p}) \mathbb{E}[\|\boldsymbol{u}_k\|^2]$.

\noindent 3) \textit{Cross-interference power}: Since the entries of $\boldsymbol{G}_{\mathtt{SI}}^{\mathtt{NLoS}}$ are i.i.d. $\mathcal{CN}(0,1)$, we have $\mathbb{E}[\vert\boldsymbol{u}_k^{\mathtt{H}} (\boldsymbol{G}_{\mathtt{SI}}^{\mathtt{NLoS}})^{\mathtt{H}} \tilde{\boldsymbol{w}}\vert^2] = \mathbb{E}[\Vert\boldsymbol{u}_k\Vert^2]$. Thus, we obtain:
    \begin{align}\label{eq:P_sen_decomposed}
        &\mathbb{E}[I_k^{\mathtt{ec}}] + \mathbb{E}[I_k^{\mathtt{rsi}}] = \rho \vert\alpha\vert^2 \mathbb{E} \left[ \vert \boldsymbol{u}_k^{\mathtt{H}} \boldsymbol{G}(\theta, \phi) \tilde{\boldsymbol{w}} \vert^2 \right] + \frac{\rho \chi_{\mathtt{SI}}\kappa_{\mathtt{SI}}}{\kappa_{\mathtt{SI}}+1}  \nonumber\\ 
        & \times  \mathbb{E} \left[ \vert \boldsymbol{u}_k^{\mathtt{H}} (\boldsymbol{G}_{\mathtt{SI}}^{\mathtt{LoS}})^{\mathtt{H}} \tilde{\boldsymbol{w}} \vert^2 \right] + \frac{\rho \chi_{\mathtt{SI}}}{\kappa_{\mathtt{SI}}+1} \mathbb{E}[\Vert\boldsymbol{u}_k\Vert^2].
    \end{align}
Let $\boldsymbol{u}_k = \Vert\boldsymbol{u}_k\Vert \bar{\boldsymbol{u}}_k$. For large $N_{\mathtt{r}}$, the statistically isotropic direction $\bar{\boldsymbol{u}}_k$ yields $\mathbb{E}[\bar{\boldsymbol{u}}_k \bar{\boldsymbol{u}}_k^{\mathtt{H}}] = \frac{1}{N_{\mathtt{r}}}\boldsymbol{I}_{N_{\mathtt{r}}}$. Applying this with $\tilde{\boldsymbol{w}} = \frac{1}{\sqrt{N_{\mathtt{t}}}}\boldsymbol{a}_{\mathtt{t}}(\theta, \phi)$ gives:
   \begin{align}\label{eq:echo_power}
    &\mathbb{E} \left[ \vert \boldsymbol{u}_k^{\mathtt{H}} \boldsymbol{G}(\theta, \phi) \tilde{\boldsymbol{w}} \vert^2 \right] \nonumber\\ 
    &= \mathbb{E}[\Vert\boldsymbol{u}_k\Vert^2] \, \tilde{\boldsymbol{w}}^{\mathtt{H}} \boldsymbol{G}^{\mathtt{H}}(\theta, \phi) \mathbb{E}[\bar{\boldsymbol{u}}_k \bar{\boldsymbol{u}}_k^{\mathtt{H}}] \boldsymbol{G}(\theta, \phi) \tilde{\boldsymbol{w}}  \nonumber\\
    &= \mathbb{E}[\Vert\boldsymbol{u}_k\Vert^2] \, \frac{1}{N_{\mathtt{r}}} \Vert \sqrt{N_{\mathtt{t}}} \boldsymbol{a}_{\mathtt{r}}(\theta, \phi) \Vert^2 = \mathbb{E}[\Vert\boldsymbol{u}_k\Vert^2] \, N_{\mathtt{t}}.
\end{align}
   Similarly, we have $\mathbb{E} \left[ \vert \boldsymbol{u}_k^{\mathtt{H}} (\boldsymbol{G}_{\mathtt{SI}}^{\mathtt{LoS}})^{\mathtt{H}} \tilde{\boldsymbol{w}} \vert^2 \right] \!\!\!= \!\!\!\mathbb{E}[\Vert\boldsymbol{u}_k\Vert^2] \, \mathbb{E} \left[ \vert \bar{\boldsymbol{u}}_k^{\mathtt{H}} (\boldsymbol{G}_{\mathtt{SI}}^{\mathtt{LoS}})^{\mathtt{H}} \tilde{\boldsymbol{w}} \vert^2 \right].$
   Since  $\boldsymbol{G}_{\mathtt{SI}}^{\mathtt{LoS}} \!\!\!= \!\!\!\boldsymbol{a}_{\mathtt{t}}(\theta_{\mathtt{SI}}, \phi_{\mathtt{SI}}) \boldsymbol{a}_{\mathtt{r}}^{\mathtt{H}}(\theta_{\mathtt{SI}}, \phi_{\mathtt{SI}})$, the isotropic combining property yields $ \mathbb{E} \left[ \left\vert{} \bar{\boldsymbol{u}}_k^{\mathtt{H}} (\boldsymbol{G}_{\mathtt{SI}}^{\mathtt{LoS}})^{\mathtt{H}} \tilde{\boldsymbol{w}} \right\vert{}^2 \right] = \frac{1}{N_{\mathtt{r}}} \left\Vert{} (\boldsymbol{G}_{\mathtt{SI}}^{\mathtt{LoS}})^{\mathtt{H}} \tilde{\boldsymbol{w}} \right\Vert{}^2 = \frac{1}{N_{\mathtt{r}}} \left\vert{} \boldsymbol{a}_{\mathtt{t}}^{\mathtt{H}}(\theta_{\mathtt{SI}}, \phi_{\mathtt{SI}}) \tilde{\boldsymbol{w}} \right\vert{}^2 \left\Vert{} \boldsymbol{a}_{\mathtt{r}}(\theta_{\mathtt{SI}}, \phi_{\mathtt{SI}}) \right\Vert{}^2$. 
   
   From $\tilde{\boldsymbol{w}} = \frac{1}{\sqrt{N_{\mathtt{t}}}}\boldsymbol{a}_{\mathtt{t}}(\theta, \phi)$ and applying Kronecker decomposition with $\Vert\boldsymbol{a}_{\mathtt{r}}(\theta_{\mathtt{SI}}, \phi_{\mathtt{SI}})\Vert^2 =$ $ N_{\mathtt{r}}$ yields the Dirichlet kernels $\left\vert{} \boldsymbol{a}_{\mathtt{t}}^{\mathtt{H}}(\theta_{\mathtt{SI}}, \phi_{\mathtt{SI}}) \tilde{\boldsymbol{w}} \right\vert{}^2 $ $= \frac{1}{N_{\mathtt{t}}} \left\vert{} \boldsymbol{a}_{\mathtt{t}}^{\mathtt{H}}(\theta_{\mathtt{SI}}, \phi_{\mathtt{SI}}) \boldsymbol{a}_{\mathtt{t}}(\theta, \phi) \right\vert{}^2 = \frac{1}{N_{\mathtt{t}} } \Big| \Big( \sum_{m=-\frac{N_{\mathtt{t},\mathtt{h}}-1}{2}}^{\frac{N_{\mathtt{t},\mathtt{h}}-1}{2}} e^{-j \pi m (\sin\theta_{\mathtt{SI}} \sin\phi_{\mathtt{SI}} - \sin\theta \sin\phi)} \Big)  \\ \Big( \sum_{n=-\frac{N_{\mathtt{t},\mathtt{v}}-1}{2}}^{\frac{N_{\mathtt{t},\mathtt{v}}-1}{2}} e^{-j \pi n (\cos\phi_{\mathtt{SI}} - \cos\phi)} \Big) \Big|^2 = \frac{\mathcal{D}_{\mathtt{H}}(\theta, \phi) \mathcal{D}_{\mathtt{V}}(\phi)}{N_{\mathtt{t}}}$. Consequently, we obtain $\mathbb{E} \left[ \vert \boldsymbol{u}_k^{\mathtt{H}} (\boldsymbol{G}_{\mathtt{SI}}^{\mathtt{LoS}})^{\mathtt{H}} \tilde{\boldsymbol{w}} \vert^2 \right] =  \frac{\mathbb{E}[\Vert\boldsymbol{u}_k\Vert^2]}{N_{\mathtt{t}}} \mathcal{D}_{\mathtt{H}}(\theta, \phi) \mathcal{D}_{\mathtt{V}}(\phi)$. Substituting this result and \eqref{eq:echo_power} into \eqref{eq:P_sen_decomposed}, we obtain $\mathbb{E}[I_k^{\mathtt{ec}}] + \mathbb{E}[I_k^{\mathtt{rsi}}] = I^\mathtt{se}(\rho) \mathbb{E}[\Vert{}\boldsymbol{u}_k\Vert{}^2]$.
   
   Finally, evaluating the AWGN as $\mathbb{E}[\vert{}\boldsymbol{u}_k^{\mathtt{H}} \boldsymbol{n}_c\vert{}^2] = \sigma_c^2 \mathbb{E}[\Vert{}\boldsymbol{u}_k\Vert{}^2]$ and combining all evaluated expectations into \eqref{eq:Rate_MC} yields $R_k^\mathtt{cp}(\boldsymbol{p}, \rho) = B \log_2 \big( 1 + \frac{p_k \left( 1 / \mathbb{E}[\Vert{}\boldsymbol{u}_k\Vert{}^2] \right)}{I^{\mathtt{er}}(\boldsymbol{p}) + I^\mathtt{se}(\rho) + \sigma_c^2} \big)$. By the ZF inverse Wishart property $1/\mathbb{E}[\Vert\boldsymbol{u}_k\Vert^2] = (N_{\mathtt{r}} - K)\xi_k$, we obtain $R_k^\mathtt{cp}(\boldsymbol{p}, \rho)$ as given in \eqref{eq:up_link_rate}.

\section{Proof of Remark~\ref{remark_dof_limits}}\label{proof_remark_dof}
For fixed $K$, $N_{\mathtt{t}}$, pilot powers, and radar power, the SINR in \eqref{eq:up_link_rate} is
\begin{equation}
\gamma_k(N_{\mathtt{r}})=g_k(N_{\mathtt{r}}-K),\ \text{with}\
 g_k=\frac{p_k\xi_k}{I^{\mathtt{er}}(\boldsymbol p)+I^{\mathtt{se}}(\rho)+\sigma_c^2}.
\end{equation}
When user powers are fixed, $g_k$ is independent of $N_{\mathtt{r}}$. Hence, $\gamma_k(N_{\mathtt{r}})/N_{\mathtt{r}}\to g_k$ and
\begin{equation}
R_k^{\mathtt{cp}}=B\log_2 N_{\mathtt{r}}+B\log_2 g_k+o(1).
\end{equation}
Next, let $p_u=P_u^{\mathtt{ref}}/N_{\mathtt{r}}$ for every $u\in\mathcal K$, keeping the pilot powers fixed so that $\xi_k$ does not change. The desired-signal term satisfies
\begin{equation}
p_k(N_{\mathtt{r}}-K)\xi_k=P_k^{\mathtt{ref}}\xi_k\left(1-\frac{K}{N_{\mathtt{r}}}\right)
\longrightarrow P_k^{\mathtt{ref}}\xi_k.
\end{equation}
The channel-error interference becomes $I^{\mathtt{er}}(\boldsymbol p)=N_{\mathtt{r}}^{-1}\sum_{u\in\mathcal K}P_u^{\mathtt{ref}}\epsilon_u$ and tends to zero. Since radar interference and receiver noise remain fixed, the SINR converges to $P_k^{\mathtt{ref}}\xi_k/[I^{\mathtt{se}}(\rho)+\sigma_c^2]$, giving \eqref{eq:ultimate_upper_bound}. This limit applies to the stated power-scaling regime; it does not imply rate saturation when user powers remain fixed.

\section{Proof of Lemma~\ref{LM_lemma1}} \label{app:proofoflemma1}
With independent, zero-mean symbols ($\mathbb{E}[|s_k(l)|^2] = \mathbb{E}[|s_{k'}(l)|^2] = \mathbb{E}[|s_{\mathrm{rad}}(l)|^2] = 1$), all cross-correlations vanish, yielding the snapshot covariance:
\begin{align}\label{eq:R_in_snapshot_decomposed}
    &\boldsymbol{R}_{\mathrm{in}}(l) = \mathbb{E}[\boldsymbol{n}_{\mathrm{eq}}(l) \boldsymbol{n}_{\mathrm{eq}}^{\mathtt{H}}(l)]    \nonumber\\ 
    &  = \!\!\! \! \sum_{k \in \mathcal{K}_{\mathtt{MEC}}}\!\! p_k \mathbb{E} \left[ \boldsymbol{h}_k \boldsymbol{h}_k^{\mathtt{H}} \right]  + \!\!\!\!\sum_{k' \in \mathcal{K}_{\mathtt{com}}}\!\!\! p_{k'} \mathbb{E} \left[ \boldsymbol{h}_{k'} \boldsymbol{h}_{k'}^{\mathtt{H}} \right] \nonumber\\ 
    & + \underbrace{\chi_{\mathtt{SI}} \mathbb{E} \left[ \boldsymbol{G}_{\mathtt{SI}}^{\mathtt{H}} \boldsymbol{w} \boldsymbol{w}^{\mathtt{H}} \boldsymbol{G}_{\mathtt{SI}} \right]}_ {\boldsymbol{R}_{\mathtt{SI}}(l)} + \underbrace{\mathbb{E} \left[ \boldsymbol{n}_c(l)\boldsymbol{n}_c^{\mathtt{H}}(l) \right]}_{\sigma_c^2 \boldsymbol{I}_{N_{\mathtt{r}}}}.
\end{align}
Applying $\mathbb{E} [ \boldsymbol{h}_u \boldsymbol{h}_u^{\mathtt{H}} ] = \beta_u \mathbb{E} [ \bar{\boldsymbol{h}}_u \bar{\boldsymbol{h}}_u^{\mathtt{H}} ] = \beta_u \boldsymbol{I}_{N_{\mathtt{r}}}$ yields:
\begin{align}\label{eq:calculate_I_up}
   \!\!\! \! \sum_{k \in \mathcal{K}_{\mathtt{MEC}}}\!\! p_k \mathbb{E} \left[ \boldsymbol{h}_k \boldsymbol{h}_k^{\mathtt{H}} \right] + \!\!\!\!\sum_{k' \in \mathcal{K}_{\mathtt{com}}}\!\!\! p_{k'} \mathbb{E} \left[ \boldsymbol{h}_{k'} \boldsymbol{h}_{k'}^{\mathtt{H}} \right] =  I^{\mathtt{up}}(\boldsymbol{p})\boldsymbol{I}_{N_{\mathtt{r}}}.
\end{align}
 Substituting \eqref{eq:G_si} into $\boldsymbol{R}_{\mathtt{SI}}(l)$, the cross-terms vanish due to the i.i.d. $\mathcal{CN}(0, 1)$ entries of $\boldsymbol{G}_{\mathtt{SI}}^{\mathtt{NLoS}}$, and thus
\begin{align}\label{eq:R_SI_decoupled}
    \boldsymbol{R}_{\mathtt{SI}}(l) &= \underbrace{ \frac{\chi_{\mathtt{SI}} \kappa_{\mathtt{SI}}}{\kappa_{\mathtt{SI}}+1} \mathbb{E}\left[ (\boldsymbol{G}_{\mathtt{SI}}^{\mathtt{LoS}})^{\mathtt{H}} \boldsymbol{w} \boldsymbol{w}^{\mathtt{H}} \boldsymbol{G}_{\mathtt{SI}}^{\mathtt{LoS}} \right] }_{\text{Deterministic Colored LoS Component}} \\ \nonumber
    &\quad + \underbrace{ \frac{\chi_{\mathtt{SI}}}{\kappa_{\mathtt{SI}}+1} \mathbb{E} \left[ (\boldsymbol{G}_{\mathtt{SI}}^{\mathtt{NLoS}})^{\mathtt{H}} \boldsymbol{w} \boldsymbol{w}^{\mathtt{H}} \boldsymbol{G}_{\mathtt{SI}}^{\mathtt{NLoS}} \right] }_{\text{Scattered NLoS Component}}.
\end{align}
Applying $\mathbb{E}[\boldsymbol{H}^{\mathtt{H}} \boldsymbol{A} \boldsymbol{H}] = \text{Tr}(\boldsymbol{A})\boldsymbol{I}$ to the scattered NLoS component yields:
\begin{align}\label{eq:NLOS_component}
    \mathbb{E} \left[ (\boldsymbol{G}_{\mathtt{SI}}^{\mathtt{NLoS}})^{\mathtt{H}} \boldsymbol{w} \boldsymbol{w}^{\mathtt{H}} \boldsymbol{G}_{\mathtt{SI}}^{\mathtt{NLoS}} \right] = \|\boldsymbol{w}\|^2 \boldsymbol{I}_{N_{\mathtt{r}}} = \rho \boldsymbol{I}_{N_{\mathtt{r}}}.
\end{align}
Since the exact FIM for colored LoS interference is intractable \cite{Liu2021CRB}, we bound it as spatially white Gaussian noise \cite{Liu2021CRB,Li2007MIMORadar}. From $\boldsymbol{G}_{\mathtt{SI}}^{\mathtt{LoS}} = \boldsymbol{a}_{\mathtt{t}}(\theta_{\mathtt{SI}}, \phi_{\mathtt{SI}}) \boldsymbol{a}_{\mathtt{r}}^{\mathtt{H}}(\theta_{\mathtt{SI}}, \phi_{\mathtt{SI}})$, we have $\mathbb{E}[ (\boldsymbol{G}_{\mathtt{SI}}^{\mathtt{LoS}})^{\mathtt{H}} \boldsymbol{w} \boldsymbol{w}^{\mathtt{H}} \boldsymbol{G}_{\mathtt{SI}}^{\mathtt{LoS}} ] = |\boldsymbol{a}_{\mathtt{t}}^{\mathtt{H}} \boldsymbol{w}|^2 \boldsymbol{a}_{\mathtt{r}} \boldsymbol{a}_{\mathtt{r}}^{\mathtt{H}}$. Applying the isotropic approximation ($\boldsymbol{a}_{\mathtt{r}} \boldsymbol{a}_{\mathtt{r}}^{\mathtt{H}} \approx \boldsymbol{I}_{N_{\mathtt{r}}}$) and uncorrelated leakage assumption ($|\boldsymbol{a}_{\mathtt{t}}^{\mathtt{H}} \boldsymbol{w}|^2 \approx \rho$), the LoS covariance is approximated as:
\begin{align}\label{eq:app_LOS}
    \mathbb{E}\left[ (\boldsymbol{G}_{\mathtt{SI}}^{\mathtt{LoS}})^{\mathtt{H}} \boldsymbol{w} \boldsymbol{w}^{\mathtt{H}} \boldsymbol{G}_{\mathtt{SI}}^{\mathtt{LoS}} \right] &\approx \rho \boldsymbol{I}_{N_{\mathtt{r}}}.
\end{align}
Substituting \eqref{eq:NLOS_component} and \eqref{eq:app_LOS} into \eqref{eq:R_SI_decoupled}, we obtain $\boldsymbol{R}_{\mathtt{SI}}(l) \approx \chi_{\mathtt{SI}} \left( \frac{\kappa_{\mathtt{SI}} \rho \boldsymbol{I}_{N_{\mathtt{r}}}}{\kappa_{\mathtt{SI}}+1} + \frac{\rho \boldsymbol{I}_{N_{\mathtt{r}}} }{\kappa_{\mathtt{SI}}+1} \right)= \chi_{\mathtt{SI}} \rho \boldsymbol{I}_{N_{\mathtt{r}}}$. By substituting $\boldsymbol{R}_{\mathtt{SI}}(l)$ and \eqref{eq:calculate_I_up} into \eqref{eq:R_in_snapshot_decomposed}, we obtain $ \boldsymbol{R}_{\mathtt{in}}(l) = \left( I^{\mathtt{up}}(\boldsymbol{p}) + \sigma_{\mathtt{rad}}^2(\rho) \right) \boldsymbol{I}_{N_{\mathtt{r}}}$. Finally, for $L$ temporally uncorrelated snapshots, the covariance of $\boldsymbol{n}_{\mathtt{eq}}$ is given by
\begin{equation}
    \boldsymbol{R}_{\mathtt{in}} = \boldsymbol{I}_L \otimes \boldsymbol{R}_{\mathtt{in}}(l) = \left( I^{\mathtt{up}}(\boldsymbol{p}) + \sigma_{\mathtt{rad}}^2(\rho) \right) \boldsymbol{I}_{N_{\mathtt{r}} L}.
\end{equation}
\section{Proof of Remark~\ref{remark_tractability_CRLB}}\label{proof_CRLB}
Under perfect alignment $\tilde{\boldsymbol{w}} = \frac{1}{\sqrt{N_{\mathtt{t}}}}\boldsymbol{a}_{\mathtt{t}}$, substituting $\dot{\boldsymbol{G}}_\theta = \dot{\boldsymbol{a}}_{\mathtt{r},\theta} \boldsymbol{a}_{\mathtt{t}}^{\mathtt{H}} + \boldsymbol{a}_{\mathtt{r}} \dot{\boldsymbol{a}}_{\mathtt{t},\theta}^{\mathtt{H}}$ yields $ \dot{\boldsymbol{G}}_\theta \tilde{\boldsymbol{w}} = \frac{1}{\sqrt{N_{\mathtt{t}}}} \left( \dot{\boldsymbol{a}}_{\mathtt{r},\theta} (\boldsymbol{a}_{\mathtt{t}}^{\mathtt{H}} \boldsymbol{a}_{\mathtt{t}}) + \boldsymbol{a}_{\mathtt{r}} (\boldsymbol{a}_{\mathtt{t}}^{\mathtt{H}} \dot{\boldsymbol{a}}_{\mathtt{t},\theta})^{\mathtt{H}} \right)$. Under the centered array indexing convention, $\|\boldsymbol{a}_{\mathtt{t}}\|^2 = N_{\mathtt{t}}$ and the orthogonality property ($\boldsymbol{a}_{\mathtt{t}}^{\mathtt{H}} \dot{\boldsymbol{a}}_{\mathtt{t},\theta} = 0$) eliminates the second term, reducing the projections to $\dot{\boldsymbol{G}}_\theta \tilde{\boldsymbol{w}} = \sqrt{N_{\mathtt{t}}} \dot{\boldsymbol{a}}_{\mathtt{r},\theta}$ and $\dot{\boldsymbol{G}}_\phi \tilde{\boldsymbol{w}} = \sqrt{N_{\mathtt{t}}} \dot{\boldsymbol{a}}_{\mathtt{r},\phi}$. Substituting these projections into \eqref{eq:C_theta_theta}--\eqref{eq:c_phi_alpha}, the spatial metrics decouple and scale with $N_{\mathtt{t}}$, yielding \eqref{eq:c_theta_theta_final}--\eqref{eq:c_alpha_alpha_final}. Note that receive array orthogonality ($\boldsymbol{a}_{\mathtt{r}}^{\mathtt{H}} \dot{\boldsymbol{a}}_{\mathtt{r},\theta} = 0$) yields $\tilde{\boldsymbol{c}}_{\theta\bar{\boldsymbol{\alpha}}} = \tilde{\boldsymbol{c}}_{\phi\bar{\boldsymbol{\alpha}}} = \boldsymbol{0}$, fully decoupling the spatial angles from $\bar{\boldsymbol{\alpha}}$.

Substituting \eqref{eq:c_theta_theta_final}--\eqref{eq:c_alpha_alpha_final} into $\mathcal{G}_\theta(\tilde{\boldsymbol{w}})$ defined in \eqref{eq:steering_vector_theta}, its inverse simplifies to $(\mathcal{G}_\theta)^{-1} = \tilde{c}_{\theta\theta} - \frac{|\tilde{c}_{\theta\phi}|^2}{\tilde{c}_{\phi\phi}} = |\alpha|^2 N_{\mathtt{t}} U_{\mathtt{r},\theta}^{\mathtt{eff}}$, where $U_{\mathtt{r},\theta}^{\mathtt{eff}} \triangleq \frac{U_{\mathtt{r},\theta} U_{\mathtt{r},\phi} - U_{\mathtt{r},\theta\phi}^2}{U_{\mathtt{r},\phi}}$. Letting $A \triangleq \frac{N_{\mathtt{r}} \pi^2}{12}$, expanding the numerator of $U_{\mathtt{r},\theta}^{\mathtt{eff}}$ makes the cross-terms vanish: $U_{\mathtt{r},\theta} U_{\mathtt{r},\phi} - U_{\mathtt{r},\theta\phi}^2 = A^2 \cos^2(\theta) \sin^4(\phi) (N_{\mathtt{rh}}^2-1)(N_{\mathtt{rv}}^2-1)$. Inverting $(\mathcal{G}_\theta)^{-1}$ and substituting it back into \eqref{eq:CRLB_theta} yields the closed-form $\text{CRLB}_\theta$ in \eqref{eq:CRLB_final_theta}.

Similarly, substituting \eqref{eq:c_theta_theta_final}--\eqref{eq:c_alpha_alpha_final} into \eqref{eq:steering_vector_phi} yields $(\mathcal{G}_\phi)^{-1} = |\alpha|^2 N_{\mathtt{t}} \left( U_{\mathtt{r},\phi} - \frac{U_{\mathtt{r},\theta\phi}^2}{U_{\mathtt{r},\theta}} \right) = |\alpha|^2 N_{\mathtt{t}} A \sin^2(\phi) (N_{\mathtt{rv}}^2-1)$. Applying this to \eqref{eq:CRLB_phi} yields the closed-form $\text{CRLB}_\phi$ in \eqref{eq:CRLB_final_phi}.
{
\section{Proof of Remark~\ref{remark_Nt_CRLB}}\label{Proof_remark_Nt_CRLB}
Defining $\Upsilon \triangleq \frac{6 ( I^{\mathtt{up}}(\boldsymbol{p}) + \sigma_{\mathtt{rad}}^2(\rho) )}{L \rho |\alpha|^2 \pi^2}$, \eqref{eq:CRLB_final_phi} becomes $\text{CRLB}_\phi = \frac{\Upsilon}{N_{\mathtt{t}} N_{\mathtt{r}} \sin^2(\phi) (N_{\mathtt{rv}}^2-1)} = \mathcal{O}\left(\frac{1}{N_{\mathtt{t}}}\right)$. $\text{CRLB}_\theta$ shares this $\mathcal{O}(1/N_{\mathtt{t}})$ scaling, showing that transmit-array gain reduces the bounds as $N_{\mathtt{t}} \to \infty$. Second, for a square UPA ($N_{\mathtt{rh}} = N_{\mathtt{rv}} = \sqrt{N_{\mathtt{r}}}$), factoring \eqref{eq:CRLB_final_theta} and \eqref{eq:CRLB_final_phi} simplifies both bounds to $\text{CRLB}_{\{\theta,\phi\}} \propto \frac{1}{N_{\mathtt{r}}(N_{\mathtt{r}} - 1)}$. As $N_{\mathtt{r}} \gg 1$, we obtain a quadratic decay $\text{CRLB}_{\{\theta,\phi\}} = \mathcal{O}\left(1/N_{\mathtt{r}}^2\right)$.
}
\begingroup
\bibliographystyle{IEEEtran}
\bibliography{Bibliography}
\endgroup

\end{document}